\newtheorem{theorem}{Theorem}[section]
\newtheorem{lemma}[theorem]{Lemma}
\newtheorem{definition}[theorem]{Definition}
\newtheorem{observation}[theorem]{Observation}
\providecommand{\prob}[1]{{\sf #1}}
\providecommand{\area}{\mathsf{Area}}
\providecommand{\weight}{\mathsf{Weight}}
\providecommand{\next}{\mathsf{next}}
\providecommand{\nil}{\mathbf{nil}}
\providecommand{\first}{\mathsf{first}}
\newcommand\blfootnote[1]{%
  \begingroup
  \renewcommand\thefootnote{}\footnote{#1}%
  \addtocounter{footnote}{-1}%
  \endgroup}
\title{Maximum rectilinear convex subsets}
\date{}
\author[1]{Hern\'an Gonz\'alez-Aguilar\thanks{Email: hernan@fc.uaslp.mx}}
\author[2]{David Orden \thanks{Research supported by project MTM2017-83750-P of the Spanish Ministry of Science (AEI/FEDER, UE). Email: david.orden@uah.es}}
\author[3]{Pablo P\'erez-Lantero\thanks{Research supported by project CONICYT FONDECYT/Regular 1160543 (Chile). Email: pablo.perez.l@usach.cl}}
\author[4]{David Rappaport\thanks{Research supported by NSERC of Canada Discovery Grant RGPIN/06662-2015. Email: daver@cs.queensu.ca}}
\author[5]{Carlos Seara\thanks{Research supported by projects MTM2015-63791-R MINECO/FEDER and Gen. Cat. DGR 2017SGR1640. Email: carlos.seara@upc.edu}}
\author[6]{Javier Tejel\thanks{Research supported by projects MTM2015-63791-R MINECO/FEDER and Gobierno de Arag\'on E41-17R. Email: jtejel@unizar.es}}
\author[7]{Jorge Urrutia\thanks{Research supported by PAPIIT grant IN102117 from UNAM. Email: urrutia@matem.unam.mx}}
\affil[1]{Facultad de Ciencias, Universidad Aut\'onoma de San Luis Potos\'i, Mexico.}
\affil[2]{Departamento de F\'{i}sica y Matem\'{a}ticas, Universidad de Alcal\'{a}, Spain.}
\affil[3]{Departamento de Matem\'atica y Ciencia de la Computaci\'on, USACH, Chile.}
\affil[4]{School of Computing, Queen's University, Canada.}
\affil[5]{Departament de Matem\`{a}tiques, Universitat Polit\`{e}cnica de Catalunya, Spain.}
\affil[6]{Departamento de M\'etodos Estad\'isticos, IUMA, Universidad de Zaragoza, Spain.}
\affil[7]{Instituto de Matem\'{a}ticas, Universidad Nacional Aut\'{o}noma de M\'{e}xico, Mexico.}
\begin{document}
  \maketitle
 \blfootnote{
 	\begin{minipage}[l]{0.3\textwidth}
 		\includegraphics[trim=10cm 6cm 10cm 5cm,clip,scale=0.12]{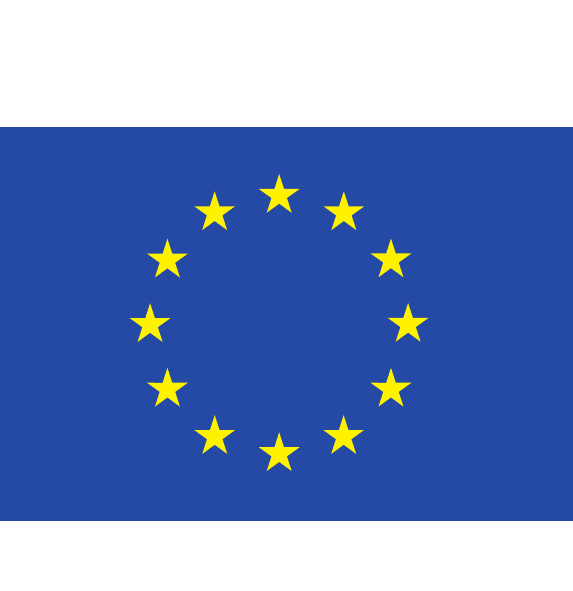}
 	\end{minipage}
 	\hspace{-3cm}
 	\begin{minipage}[l][1cm]{0.75\textwidth}
       	This work has received funding from the European Union's Horizon 2020
       	research and innovation programme under the Marie Sk\l{}odowska-Curie
       	grant agreement No 734922.
      \end{minipage}
 }
 \vspace{-0.9cm}
%  \linenumbers

% REQUIRED
\begin{abstract}
Let $P$\label{page1} be a set of $n$ points in the plane. We consider a variation of the classical Erd\H{o}s-Szekeres problem, presenting efficient algorithms with $O(n^3)$ running time and $O(n^2)$ space complexity that compute: (1) A subset~$S$ of~$P$ such that the boundary of the rectilinear convex hull of $S$ has the maximum number of points from $P$, (2) a subset $S$ of $P$ such that the boundary of the rectilinear convex hull of $S$ has the maximum number of points from $P$ and its interior contains no element of $P$, (3) a subset $S$ of $P$ such that the rectilinear convex hull of $S$ has maximum area and its interior contains no element of $P$, and (4) when each point of $P$ is assigned a weight, positive or negative, a subset $S$ of $P$ that maximizes the total weight of the points in the rectilinear convex hull of $S$.

We also revisit the problems of computing a maximum-area orthoconvex polygon and computing a maximum-area staircase polygon, amidst a point set in a rectangular domain. We obtain new and simpler algorithms to solve both problems with the same complexity as in the state of the art.
\end{abstract}

%% REQUIRED
%\begin{keywords}
%Erd\H{o}s-Szekeres problems, convex subsets, optimization, rectilinear convex hull.
%\end{keywords}
%
%% REQUIRED
%\begin{AMS}
%68U05
%\end{AMS}

\section{Introduction}

Let $P$ be a point set in general position in the plane. A subset $S$ of $P$ with $k$ elements is called a \emph{convex $k$-gon} if the elements of $S$ are the vertices of a convex polygon, and it is called a \emph{convex $k$-hole} of $P$ if the interior of the convex hull of $S$ contains no element of $P$. The study of convex $k$-gons and convex $k$-holes of point sets started in a seminal paper by Erd\H{o}s and Szekeres~\cite{ErdSze35} in 1935. Since then,  numerous papers about both the combinatorial and the algorithmic aspects of convex $k$-gons and convex $k$-holes have been published. The reader can consult the two survey papers about so-called Erd\H{o}s-Szekeres type problems~\cite{BraMosPac05,morris2000erdos}.

There are recent papers studying the existence and number of convex $k$-gons and convex $k$-holes for finite point sets in the plane~\cite{aichholzer2015k,aichholzer20144,aichholzer2014lower}. Papers dealing with the algorithmic complexity of finding largest convex $k$-gons and convex $k$-holes are, respectively, Chv\'atal and Kincsek~\cite{chvatal1980} and Avis and Rappaport~\cite{avis1985}, which solve these problems in $O(n^3)$ time.

Erd\H{o}s-Szekeres type problems have also been studied for colored point sets. Let $P$ be a point set such that each of its elements is assigned a color, say red or blue. Bautista-Santiago~et~al.~\cite{bautista2011} studied the problem of finding a monochromatic subset $S$ of $P$ of maximum cardinality such that all of the elements of $P$ contained in the convex hull of $S$ have the same color. As a generalization, they also studied the problem in which each element of $P$ has assigned a (positive or negative) weight. In this case, the goal is to find a subset $S$ of $P$ that maximizes the total weight of the points of $P$ contained in the convex hull of $S$.  Each of these problems was solved in $O(n^3)$ time and $O(n^2)$ space. In addition, their algorithm can  easily be adapted to find a subset $S$ of $P$ such that the convex hull of $S$ is empty and of maximum area in $O(n^3)$ time and $O(n^2)$ space.

In this paper, we study Erd\H{o}s-Szekeres type problems under a variation of convexity known as \emph{rectilinear convexity}, or \emph{orthoconvexity}: Let $P=\{p_1,\ldots,p_n\}$ be a set of $n$ points in the plane in general position. A \emph{quadrant} of the plane is the intersection of two open half-planes whose supporting lines are parallel to the $x$- and $y$-axes, respectively. We say that a quadrant $Q$ is $P$-\emph{free} if it does not contain any point of $P$. The \emph{rectilinear convex hull} of $P$, denoted as $RCH(P)$, initially defined by Ottmann~et~al.~\cite{ottmann1984}, is defined as:
\begin{displaymath}
	RCH(P) ~=~ \mathbb{R}^{2}-\bigcup_{\textrm{$Q$ is~$P$-\emph{free}}}Q.
\end{displaymath}
The rectilinear convex hull of a point set might be a simply connected set, yielding an intuitive and appealing structure (see Figure~\ref{fig:RCH1}). However, in other cases the rectilinear convex hull can have several connected components (see Figure~\ref{fig:RCH3}), some of which might be single points which we call {\em pinched} points. The \emph{size} of $RCH(P)$ is the number of elements of $P$ on the boundary of $RCH(P)$. The sizes of the rectilinear convex hulls in Figures~\ref{fig:RCH1} and~\ref{fig:RCH3} are, respectively, thirteen and twelve.

\begin{figure}[ht]
	\centering
	\subfloat[]{
		\includegraphics[scale=0.95,page=3]{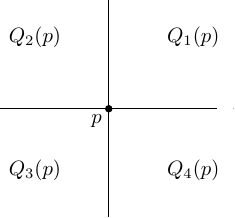}
		\label{fig:RCH1}
	}~~~~
	\subfloat[]{
		\includegraphics[scale=0.95,page=5]{img.pdf}
		\label{fig:RCH3}
	}
	\caption{\small{
		(a) A point set with a connected rectilinear convex hull.
		(b) A point set whose rectilinear convex hull is disconnected,
		two of its components are pinched points.
	}}
	\label{fig:examples1}
\end{figure}

Alegr\'ia-Galicia et al.~\cite{alegria2017} gave an optimal $\Theta(n\log n)$-time and $O(n)$-space algorithm to compute the orientation of the coordinate axes such that the rectilinear convex hull of a set $P$ of $n$ points in the plane has minimum area. The reader can refer to the literature for other results related to rectilinear convexity~\cite{alegria2017,fink-wood2004,rawlins1988}. In this paper, we present efficient algorithms for the following geometric optimization problems:

\noindent\prob{MaxRCH}: Given a set $P$ of $n$ points in the plane, find a subset $S\subseteq P$ such that the size of $RCH(S)$ is maximized.

\noindent\prob{MaxEmptyRCH}: Given a set $P$ of $n$ points in the plane, find a subset $S\subseteq P$ such that the interior of $RCH(S)$ contains no point of $P$ and the size of $RCH(S)$ is maximized.

\noindent\prob{MaxAreaRCH}: Given a set $P$ of $n$ points in the plane, find a subset $S\subseteq P$ such that the interior of $RCH(S)$ contains no point of $P$ and the area of $RCH(S)$ is maximized.

\noindent\prob{MaxWeightRCH}: Given a set $P$ of $n$ points in the plane, such that each $p\in P$ is assigned a (positive or negative) weight $w(p)$, find a subset $S\subseteq P$ that maximizes $\sum_{p\in P\cap RCH(S)}w(p)$.

In Section~\ref{sec:MaxRCH}, we give an $O(n^3)$-time $O(n^2)$-space algorithm to solve the \prob{MaxRCH} problem. Then, in Section~\ref{sec:EmptyRCH} we show how to adapt this algorithm to solve the other three problems, each in $O(n^3)$ time and $O(n^2)$ space. The complexities of our algorithms are the same as the complexities of the best-known algorithms to solve these problems with the usual definition of convexity.

Besides presenting the first algorithms to solve the problems \prob{MaxRCH}, \prob{MaxEmptyRCH}, \prob{MaxAreaRCH}, and \prob{MaxWeightRCH}, we show that our techniques can be used to provide new algorithms to solve two additional problems, considered by Nandy~et~al.~\cite{nandy2010} and Nandy and Bhattacharya~\cite{nandy2003}.

\noindent\prob{MaxOrthoconvexPolygon}: Given a set $P$ of $n$ points in the plane, contained in an axis-aligned rectangle $\mathcal{R}$ called {\em domain}, find an orthoconvex polygon of maximum area that is contained in $\mathcal{R}$ and its interior contains no point of $P$,  where a polygon is {\em orthoconvex} if its sides are axis-parallel and its intersection with any horizontal or vertical line is empty, or a line segment~\cite{rawlins1988}.

\noindent\prob{MaxStaircasePolygon}: Given a set $P$ of $n$ points in the plane contained in an axis-aligned rectangle $\mathcal{R}$, find a staircase polygon of maximum area contained in $\mathcal{R}$ whose interior contains no point of $P$, where an orthoconvex polygon contained in $\mathcal{R}$ is called a {\em staircase polygon} if it contains two opposite corners of $\mathcal{R}$.

We show in Section~\ref{sec:ortho} how to solve the \prob{MaxOrthoconvexPolygon} problem in $O(n^3)$ time and $O(n^2)$ space, and we give in Section~\ref{sec:MESP} an $O(n^2)$ time and $O(n^2)$ space algorithm to solve the \prob{MaxStaircasePolygon} problem. The algorithms by Nandy~et~al.~\cite{nandy2010} and Nandy and Bhattacharya~\cite{nandy2003} have the same time and space complexities as ours, but we believe that our solutions are simpler and easier to understand. Orthoconvex polygons (and in general isothetic polygons) have different applications in areas as VLSI layout design or robotic visibility (see the references in~\cite{nandy2003} for more information regarding applications).

\section{Some notation and definitions}\label{sec:notation}

A summary of the main notation used can be found as an appendix. For the sake of simplicity, we assume that all point sets $P$ considered in this paper are in {\em general position}, which  means that no two points of $P$ share the same $x$- or $y$-coordinate. Using a $O(n\log n)$-time preprocessing step, we can also assume when necessary that the points of a point set $P$ are ordered by $x$-coordinate or $y$-coordinate. Given a point set $P$ in the plane, we will use $a$, $b$, $c$, and $d$\label{page2} to denote the leftmost, bottommost, rightmost, and topmost points of $P$, respectively, unless otherwise stated. Note that $a$, $b$, $c$, and $d$ are not necessarily different. In Figure~\ref{fig:RCH3}, we have $a=d$.

\begin{figure}[ht]
	\centering
	\subfloat[]{\includegraphics[scale=1.05,page=26]{img.pdf}\label{fig:1S}}~~~
	\subfloat[]{\includegraphics[scale=1.00,page=1]{img.pdf}\label{fig:Qi}}~~~
	\subfloat[]{\includegraphics[scale=1.00,page=2]{img.pdf}\label{fig:Mi}}
	\caption{\small{(a) A $1$-staircase. (b) The definition of the sets $Q_i(p)$. (c) A 7-point set $P$ and the set $M_1(P)$. The vertices of the boundary of $M_1(P)$ in $P$ are the 1-extremal points of $P$. The thick polygonal line is the $1$-staircase associated with $P$.}}\label{fig:definition}
\end{figure}

Given a point $p$ of the plane, let $p_x$ and $p_y$\label{page3} denote the $x$- and $y$-coordinates of $p$, respectively.

\begin{definition}
For $p,q\in\mathbb{R}^2$, $p \neq q$, we write $p\prec q$\label{page4} to denote that $p_x < q_x$ and $p_y < q_y$, and $p\prec' q$\label{page5} to denote that $p_x < q_x$ and $p_y > q_y$.
\end{definition}

\begin{definition}
Let $p,q\in\mathbb{R}^2$, and consider a set $S=\{v_1, \ldots , v_k\}$ of $k$ points such that $v_1=p$, $v_k=q$, and $v_i \prec' v_{i+1}$ for $i=1,2,\dots,k-1$. A \emph{$1$-staircase} joining $p$ to $q$ is an orthogonal polygonal chain, such that two consecutive elements of $S$ are joined by an elbow consisting of a horizontal segment followed by a vertical segment. For an illustration, see Figure~\ref{fig:1S}.
\end{definition}

A \emph{$3$-staircase} joining $p$ to $q$ is defined in a similar way, but using elbows whose first segment is vertical. Analogously, we define $2$- and $4$-staircases, except that we require  $v_i \prec v_{i+1}$. The first segment is vertical in the $2$-staircase and horizontal in the $4$-staircase. Points of $S$ are called the vertices of the staircase.

Any point $p$ in the plane defines four open axis-aligned quadrants $Q_i(p)$, $i=1,2,3,4$, as follows (see Figure~\ref{fig:Qi}):
	$Q_1(p) ~=~ \{q\in\mathbb{R}^2\mid p\prec q\}$,\label{page6}
	$Q_2(p) ~=~ \{q\in\mathbb{R}^2\mid q\prec' p\}$,\label{page7}
	$Q_3(p) ~=~ \{q\in\mathbb{R}^2\mid q\prec p\}$,\label{page8} and
    $Q_4(p) ~=~ \{q\in\mathbb{R}^2\mid p\prec' q\}$\label{page9}.
Given a point set $P$ in the plane, for $i=1,2,3,4$, let
\[
	M_i(P)=\bigcup_{p\in P} \overline{Q_i(p)},\label{page10}
\]
where $\overline{Q_i(p)}$ denotes the closure of~$Q_i(p)$.

\begin{definition}
Given a point set $P$ in the plane, the elements of $P$ that belong to the boundary of $M_i(P)$, are called the (rectilinear) $i$-{\em extremal} points of $P$ (see Figure~\ref{fig:Mi}).
\end{definition}

Note that the $i$-extremal points of $P$ are the vertices of a $i$-staircase connecting all of them. This $i$-staircase, that we call the {\em $i$-staircase associated with $P$}, is the part of the boundary of $M_i(P)$ that connects all the $i$-extremal points of $P$ (see Figure~\ref{fig:Mi}).

\begin{definition}
Given a point set $P$ in the plane, for every $J\subseteq\{1,2,3,4\}$, we say that $p\in P$ is \emph{$J$-extremal} if $p$ is $j$-extremal for every $j\in J$.
\end{definition}

\begin{definition}
Given a point set $P$ in the plane, the \emph{rectilinear convex hull} of $P$ is the set\footnote{The notation $\mathcal{RH}(P)$ is also used for the rectilinear convex hull~\cite{alegria2017}.}

\[
	RCH(P) ~=~ \bigcap_{i=1,2,3,4} M_i(P),\label{page11}
\]
%see Figure~\ref{fig:examples1}.
\end{definition}

Figure~\ref{fig:examples1} shows some examples of rectilinear convex hulls. The boundary of $RCH(P)$ is (a part of) the union of the $1$-, $2$-, $3$- and $4$-staircases associated with $P$. Observe that the endpoints of these four staircases are $a,b,c$ and $d$, $a$ is $\{1,4\}$-extremal, $b$ is $\{1,2\}$-extremal, $c$ is $\{2,3\}$-extremal, and $d$ is $\{3,4\}$-extremal. In Figure~\ref{fig:RCH3}, as $a=d$, then $a$ is $\{1,3,4\}$-extremal and the $4$-staircase associated with $P$ consists of only point $a$.

Also observe that $RCH(P)$ is disconnected when either the intersection of the complements $\mathbb{R}^2\setminus M_1(P)$ and $\mathbb{R}^2\setminus M_3(P)$ is not empty, as shown in Figure~\ref{fig:RCH3}, or the intersection of the complements $\mathbb{R}^2\setminus M_2(P)$ and $\mathbb{R}^2\setminus M_4(P)$ is not empty. In other words, when either the $1$- and $3$-staircases associated with $P$ cross or the $2$- and $4$-staircases associated with $P$ cross.

\begin{definition}
Given a point set $P$ in the plane, a \emph{pinched point} $u$ of $RCH(P)$ occurs when $u$ is either $\{1,3\}$-extremal, as shown in Figure~\ref{fig:RCH3}, or $\{2,4\}$-extremal.
\end{definition}

\begin{definition}
Given a point set $P$ in the plane, the \emph{size} of $RCH(P)$ is the number of points of~$P$ which are $i$-extremal for at least one $i\in\{1,2,3,4\}$.
\end{definition}

From the definition of the staircases for $P$, the following observation is straightforward.

\begin{observation}\label{obs:empty}
Assume that the concatenation of the four $i$-staircases associated with $P$ is traversed counter-clockwise.
%the boundary of $RCH(P)$ is traversed counter-clockwise.
For two consecutive $i$-extremal points $p$ and $p'$, $Q_{i+2}(o)$ contains no element of $P$, where $i+2$ is taken modulo $4$ and $o=(p'_x,p_y)$ for $i=1,3$ or $o=(p_x,p'_y)$ for $i=2,4$.
\end{observation}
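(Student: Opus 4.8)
I would prove the statement uniformly in $i$. The whole argument rests on one feature of $M_i(P)=\bigcup_{s\in P}\overline{Q_i(s)}$: by definition, the $i$-staircase associated with $P$ is contained in the boundary of $M_i(P)$. Fix two consecutive $i$-extremal points $p,p'$ of $P$ in the counter-clockwise traversal of the concatenated boundary assumed in the statement; I would first observe that the point $o$ of the statement is exactly the corner of the elbow of the $i$-staircase joining $p$ to $p'$. For $i=1$, say, one has $p\prec' p'$ and this elbow consists of a horizontal segment at height $p_y$ followed by a vertical segment at abscissa $p'_x$, whose corner is $(p'_x,p_y)$, matching the stated $o$; the cases $i=2,3,4$ are checked the same way, the only subtlety being that a counter-clockwise traversal runs along the $3$- and $4$-staircases in the \emph{reverse} of the order in which their defining vertex sequences $v_1,v_2,\dots$ are listed, so there the counter-clockwise-consecutive pair is of the form $(v_{\ell+1},v_\ell)$ and, with this relabeling, the elbow corner again equals the stated $o$. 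In particular $o$ lies on the $i$-staircase, hence on $\partial M_i(P)$, so $o$ is not in the interior of $M_i(P)$.

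The rest is short. Suppose, for contradiction, that some $q\in P$ lies in $Q_{i+2}(o)$. Since $Q_i$ and $Q_{i+2}$ are antipodal quadrants, $q\in Q_{i+2}(o)$ is equivalent to $o\in Q_i(q)$; but $Q_i(q)$ is an \emph{open} set and $Q_i(q)\subseteq\overline{Q_i(q)}\subseteq M_i(P)$, so $o$ would be an interior point of $M_i(P)$, contradicting the previous paragraph. Hence $Q_{i+2}(o)$ contains no point of $P$.

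I do not expect a real obstacle. The geometric content is simply that the corner of a staircase elbow cannot be interior to $M_i(P)$, while a point of $P$ lying in the opposite quadrant $Q_{i+2}(o)$ would force $o$ into the interior. What needs a little care is purely bookkeeping: checking in each of the four cases that the stated formula for $o$ really is the elbow corner (in particular the reversed traversal direction for $i=3,4$), and the elementary equivalence $z\in Q_{i+2}(w)\iff w\in Q_i(z)$.
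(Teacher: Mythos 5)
Your proof is correct. The paper offers no argument for this statement---it is declared ``straightforward from the definition of the staircases''---so there is no proof to compare against; your write-up supplies exactly the missing verification: $o$ is the elbow corner between $p$ and $p'$ and hence lies on $\partial M_i(P)$, while a point $q\in P\cap Q_{i+2}(o)$ would give $o\in Q_i(q)\subseteq M_i(P)$ with $Q_i(q)$ open, forcing $o$ into the interior of $M_i(P)$, a contradiction. The two bookkeeping points you flag (the equivalence $q\in Q_{i+2}(o)\iff o\in Q_i(q)$ for antipodal quadrants, and the reversed traversal order along the $3$- and $4$-staircases so that the stated formula for $o$ matches the elbow corner in all four cases) both check out against the paper's definitions.
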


\begin{definition}
Given two points $u\ne v$ in the plane and a point set $P$, $B(u,v)$\label{page12} and $P(u,v)=P\cap B(u,v)$\label{page13} will denote the smallest open axis-aligned rectangle containing $u$ and $v$, and the set of points in $P$ that belong to $B(u,v)$, respectively. If $u=v$, then we define $B(u,u)$ as point $u$.
\end{definition}
%let $B(u,v)$ denote the smallest open axis-aligned rectangle containing $u$ and $v$, and let $P(u,v)=P\cap B(u,v)$.

Note that $u$ and $v$ are two opposed corners of $B(u,v)$.
%If $u=v$, then we define $B(u,u)$ as point $u$.

\begin{definition}
Given a point set $P$ in the plane, we say that $RCH(P)$ is \emph{vertically separable} if rectangles $B(a,d)$ and $B(b,c)$ are separated by a vertical line. The two examples shown in Figure~\ref{fig:examples1} are vertically separable.
\end{definition}

\begin{figure}[ht]
	\centering
	    \includegraphics[scale=0.95,page=34]{img.pdf} \qquad \qquad \qquad \qquad \includegraphics[scale=0.95,page=10]{img.pdf}
	\caption{\small{Left: Sets $R_{p\setminus q}$,
		$R_{q\setminus p}$, $R_{p,q}$, $R'_{p\setminus q}$,
		$R'_{q\setminus p}$ and $R'_{p,q}$.} Right: Example of $\mathcal{C}_{p,q}$.}\label{fig:regions}
\end{figure}

Given a point set $S$, and a horizontal line $\ell$, let $S'$ be the image of $S$ under a reflection around $\ell$. The following lemma is key for our algorithms:

\begin{lemma}\label{lem:canonical}
Let $P$ be a point set in the plane. For all $S\subseteq P$, $|S|\ge 2$, either $RCH(S)$ or $RCH(S')$ is vertically separable.
\end{lemma}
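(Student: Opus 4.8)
The plan is to reduce vertical separability of $RCH(S)$ to a single inequality between the $x$-coordinates of the extreme points of $S$, and then to notice that the reflected set $S'$ produces exactly the complementary inequality, so one of the two must hold.

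First I would unwind the definition for an arbitrary $S$ with $|S|\ge 2$. Let $a,b,c,d$ be the leftmost, bottommost, rightmost, and topmost points of $S$. Because $a$ is leftmost and $c$ is rightmost, the $x$-projections of the open rectangles $B(a,d)$ and $B(b,c)$ are the intervals $(a_x,d_x)$ and $(b_x,c_x)$. A vertical line $x=t$ puts these two rectangles in opposite closed half-planes exactly when $d_x\le t\le b_x$ or $c_x\le t\le a_x$, and such a $t$ exists iff $d_x\le b_x$ or $c_x\le a_x$. The second alternative is impossible: $c_x\le a_x$ together with $a_x\le c_x$ forces $a_x=c_x$, hence $a=c$ by general position, contradicting $|S|\ge 2$. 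Therefore
\[
RCH(S)\ \text{is vertically separable}\iff d_x\le b_x .
\]
When $a=d$ or $b=c$ one of the rectangles degenerates to a point, and I would check in passing that the same inequality still characterizes separability.

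Next I would track the extreme points under the reflection. Reflecting about a horizontal line fixes every $x$-coordinate and reverses the vertical order, so the leftmost and rightmost points of $S'$ are the images of $a$ and $c$, the topmost point of $S'$ is the image of $b$, and the bottommost point of $S'$ is the image of $d$. Feeding this into the equivalence above gives
\[
RCH(S')\ \text{is vertically separable}\iff b_x\le d_x .
\]
Since at least one of $d_x\le b_x$ and $b_x\le d_x$ always holds, at least one of $RCH(S)$ and $RCH(S')$ is vertically separable, which is the claim. In fact exactly one is, because the bottommost and topmost points of a set of at least two points in general position are distinct, so $b_x\neq d_x$; and the conclusion does not depend on which horizontal line is used, since changing the line only composes the reflection with a vertical translation, which leaves all $x$-coordinates unchanged.

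The only step with any content is the first one: making "separated by a vertical line" precise for the open — and possibly degenerate — rectangles $B(a,d)$ and $B(b,c)$, and dispatching the boundary cases in which two of $a,b,c,d$ coincide. Once separability has been rephrased purely as "$d_x\le b_x$", the rest is the immediate observation that a horizontal reflection interchanges the roles of the bottommost and topmost points while leaving every $x$-coordinate untouched.
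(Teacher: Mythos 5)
Your argument is correct and follows the same route as the paper's own proof: both reduce vertical separability of $RCH(S)$ to the single inequality $d_x<b_x$ between the topmost and bottommost points, and then observe that a reflection about a horizontal line preserves $x$-coordinates while swapping the roles of topmost and bottommost, yielding the complementary inequality for $S'$. Your version merely spells out the details (the impossibility of the alternative $c_x\le a_x$, the degenerate cases, and the fact that exactly one of the two hulls is separable by general position) that the paper leaves as ``straightforward to see.''
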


\begin{proof}
Note that $d_x < b_x$ is necessary and sufficient for vertical separability of $B(a,d)$ and $B(b,c)$, that is, $RCH(S)$ is vertically separable. Suppose then that $b_x < d_x$, and let $\ell$ be a horizontal line. It is straightforward to see that, if we reflect the point set $S$ around $\ell$, then $S$ becomes $S'$ and we have that $RCH(S')$ is vertically separable.
\end{proof}

In each of the problems \prob{MaxRCH}, \prob{MaxEmptyRCH}, \prob{MaxAreaRCH}, and \prob{MaxWeightRCH}, we will assume that the optimal subset $S\subseteq P$ is such that $RCH(S)$ is vertically separable.

To finish this section, we give one more definition.

\begin{definition}
Given a point set $P$ in the plane, for every $p,q \in P$ such that $p\prec q$, we define $R_{p\setminus q}$\label{page14}, $R_{q\setminus p}$\label{page15}, and $R_{p,q}$\label{page16} as the subsets of $P$ in the regions $Q_4(p)\setminus \overline{Q_4(q)}$, $Q_4(q)\setminus \overline{Q_4(p)}$, and $Q_4(p)\cap Q_4(q)$, respectively (see Figure~\ref{fig:regions}, left). For every $p,q\in P$ such that $q\prec'p$, we define $R'_{p\setminus q}$\label{page17}, $R'_{q\setminus p}$\label{page18} and $R'_{p,q}$\label{page19} as the subsets of $P$ in the regions $Q_4(q)\cap Q_3(p)$, $Q_4(q)\cap Q_1(p)$ and $Q_4(p)$, respectively.
\end{definition}

Observe that if $r \in R_{p\setminus q}$ then $r \prec q$, if $r \in R_{q\setminus p}$ then $p \prec r$, and if $r \in R_{p,q}$ then $r \not\prec q$ and $p \not \prec r$.

\section{Rectilinear convex hull of maximum size}\label{sec:MaxRCH}

In this section, we solve the \prob{MaxRCH} problem. Given $P$, our goal is to combine four staircases in order to obtain a subset $S$ of $P$ whose rectilinear convex hull is of maximum size. All of this has to be done carefully, since the occurrence of pinched points may lead to overcounting.

Our algorithm to solve the \prob{MaxRCH} problem proceeds in three steps: In the first step, we compute the $2$-staircases of maximum size for every $p,q\in P$ such that $p\prec q$. In the second step, we compute what we call a \emph{triple $1$-staircase} and a \emph{triple $3$-staircase} of maximum sizes (yet to be defined). In the third and last step, we show how to combine a triple $1$-staircase and a triple $3$-staircase to solve the \prob{MaxRCH} problem. In this step, we will make sure that the solution thus obtained is vertically separable. Our algorithm will run in $O(n^3)$ time and $O(n^2)$ space. We describe now in detail the steps of our algorithm.

\begin{figure}[ht]
	\centering
	\includegraphics[scale=0.9,page=11]{img.pdf}
	\caption{\small{Examples of triple $1$-staircases $\mathcal T_{p,q}$.}}\label{fig:T}
\end{figure}

{\bf The first step:} For every $p,q\in P$ such that $p\prec q$ or $p=q$, let $\mathcal C_{p,q}$\label{page20} be a $2$-staircase with endpoints $p$ and $q$ of maximum size, see Figure~\ref{fig:regions}, right. Let $C_{p,q}$\label{page21} be the number of elements of $P$ in $\mathcal C_{p,q}$. Note that $C_{p,q}$ equals the maximum number of $2$-extremal points over all $S\cup \{p,q\}$ with $S\subseteq P(p,q)$. We can easily calculate $C_{p,q}$, for all $p,q \in P$ with $p\prec q$ or $p=q$, in $O(n^3)$ time and $O(n^2)$ space, using dynamic programming with the following recurrence:
\begin{equation}\label{eq1}
    C_{p,q} ~=~ \left\{
            \begin{array}{ll}
                1 & \mathbf{if~} p=q \\
                \max\{1+C_{r,q}\}~\text{over~all}~r\in P(p,q)\cup \{q\} &
	                \mathbf{if~} p\neq q. \\
            \end{array}
        \right.
\end{equation}
Using the elements $C_{p,q}$, it is a routine matter to determine $\mathcal C_{p,q}$, for any $p\prec q$.\footnote{We note that using not so trivial methods, we can calculate all of the $C_{p,q}$'s in $O(n^2\log n)$ time. However, this yields no improvement on the overall time complexity of our algorithms.}

\begin{definition}
Given a point set $S\subseteq P$, we define the \emph{triple $1$-staircase (resp., triple $3$-staircase)} associated with $S$ as the concatenation of the $1$-, $2$- and $3$-staircases (resp., the $3$-, $4$- and $1$-staircases) associated with $S$.
\end{definition}

{\bf The second step:} In this step, our goal is to obtain a triple $1$-staircase and a triple $3$-staircase of maximum cardinality, starting and ending at some pairs of points of $P$. Triple staircases allow us to conveniently manage pinched points and disconnections of the rectilinear convex hull. Notice that the boundary of $M_1(S)\cap M_2(S)\cap M_3(S)$ (except for its two infinite rays) always belongs to the triple $1$-staircase associated with $S$.

%Consider $p,q\in P $ such that $p\prec q$ or $p=q$. Let $Z(p,q)=Q_4(u)$, where $u=(p_x,q_y)$, and let $z(p,q)=Z(p,q)\cap P$. Let $\mathcal T_{p,q}$ be the triple $1$-staircase of maximum cardinality among all subsets $S\cup \{p,q\}$ with $S\subseteq z(p,q)$. If $S'\subseteq z(p,q)$ is the set associated with $\mathcal T_{p,q}$, observe that $M_1(S'\cup \{p,q\})\cap M_2(S'\cup \{p,q\})\cap M_3(S'\cup \{p,q\})$ may contain points in $P(p,q)$, it may be disconnected, and it may have pinched points (see Figure~\ref{fig:T}). Note that $p$ and $q$ are always the endpoints of $\mathcal T_{p,q}$. Let $X_{p,q}$ denote the set of extreme vertices of $\mathcal T_{p,q}$ (that is, the set of $1$-, $2$- and $3$-extremal points of $S'\cup \{p,q\}$), and let $T_{p,q}$ be the cardinality of $X_{p,q}$.

\begin{definition}
Consider $p,q\in P $ such that $p\prec q$ or $p=q$. We use $Z(p,q)=Q_4(u)$\label{page22} to denote the fourth quadrant associated with $u$, where $u=(p_x,q_y)$, and $z(p,q)=Z(p,q)\cap P$\label{page23} to denote the points of $P$ in this quadrant.
\end{definition}

Let $\mathcal T_{p,q}$\label{page24} be the triple $1$-staircase of maximum cardinality among all subsets $S\cup \{p,q\}$ with $S\subseteq z(p,q)$. If $S'\subseteq z(p,q)$ is the set associated with $\mathcal T_{p,q}$, observe that $M_1(S'\cup \{p,q\})\cap M_2(S'\cup \{p,q\})\cap M_3(S'\cup \{p,q\})$ may contain points in $P(p,q)$, it may be disconnected, and it may have pinched points (see Figure~\ref{fig:T}). Note that $p$ and $q$ are always the endpoints of $\mathcal T_{p,q}$. Let $X_{p,q}$\label{page25} denote the set of extreme vertices of $\mathcal T_{p,q}$ (that is, the set of $1$-, $2$- and $3$-extremal points of $S'\cup \{p,q\}$), and let $T_{p,q}$\label{page26} be the cardinality of $X_{p,q}$.

\begin{figure}[ht]
	\centering
	\includegraphics[scale=0.9,page=35]{img.pdf}
	\caption{\small{Cases in the recursive computation of $T_{p,q}$.}}\label{fig:regions-cases}
\end{figure}

We calculate all of the $T_{p,q}$'s by dynamic programming using Equations~(\ref{eq2}) and~(\ref{eq3}). We store all of the $T_{p,q}$'s in a table $T$\label{page27}. If $\alpha_{p,q}=1$ when $p=q$, and $\alpha_{p,q}=2$ when $p\neq q$, then:
\begin{equation}\label{eq2}
     \arraycolsep=1.4pt\def\arraystretch{1.2}
     T_{p,q} ~=~ \max\left\{
         \begin{array}{lr}
             C_{p,q} & \textbf{\small (A)} \\
             1+T_{r,q}~\text{over~all}~ r\in R_{p\setminus q}  & \textbf{\small (B)}\\
             1+T_{p,r}~\text{over~all}~ r\in R_{q \setminus p}  & \textbf{\small (C)} \\
             \alpha_{p,q}+T_{r,r}~\text{over~all}~ r \in R_{p,q}  &~~\textbf{\small (D)} \\
             \alpha_{p,q}+U_{p,r}~\text{over~all}~ r \in R_{p,q}  & \textbf{\small (E)} \\
         \end{array}
     \right.
\end{equation}
where for every pair $p,r\in P$ such that $p\prec ' r$
\begin{equation}\label{eq3}
    U_{p,r} ~=~ \max\{T_{r,s}\}~\text{over~all}~ s\in R'_{p\setminus r}.
\end{equation}
Values $U_{p,r}$ are stored in a table $U$\label{page29}. The next lemma shows the correctness of this recurrence.

\begin{lemma}\label{lem:step2}
The recurrence~(\ref{eq2}) correctly calculates $T_{p,q}$, the size of $X_{p,q}$, in $O(n^3)$ time and $O(n^2)$ space.
\end{lemma}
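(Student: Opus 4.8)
The plan is to prove the recurrence~(\ref{eq2}) by a standard optimal-substructure argument, splitting into cases according to the "shape" of the optimal triple $1$-staircase $\mathcal T_{p,q}$ near its endpoints $p$ and $q$. I would first verify the easy inequality $T_{p,q}\ge$ (right-hand side of (\ref{eq2})): for term (A), every $2$-staircase from $p$ to $q$ with vertices in $P(p,q)\cup\{p,q\}$ is in particular a valid (degenerate) triple $1$-staircase, so $T_{p,q}\ge C_{p,q}$. For terms (B)--(E), I would show that any candidate configuration achieving the stated value can be completed to a genuine triple $1$-staircase with endpoints $p,q$ of that cardinality, using Observation~\ref{obs:empty} to certify that the prescribed quadrants stay $P$-free. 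The $\alpha_{p,q}$ correction in (D) and (E) accounts for whether $p=q$ (so $p$ is counted once) or $p\ne q$ (so both $p$ and $q$ are counted).

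The substance is the reverse inequality: given an optimal $\mathcal T_{p,q}$ with extreme-vertex set $X_{p,q}$, I would look at the $1$-extremal vertex of $S'\cup\{p,q\}$ immediately following $p$ along the triple staircase, and symmetrically at the $3$-extremal vertex immediately preceding $q$, call them $r_1$ and $r_3$. Four cases arise depending on where these vertices lie relative to $Q_4(p)\cap Q_4(q)$, $Q_4(p)\setminus\overline{Q_4(q)}$, and $Q_4(q)\setminus\overline{Q_4(p)}$ — i.e. the regions $R_{p\setminus q}$, $R_{q\setminus p}$, $R_{p,q}$ defined at the end of Section~\ref{sec:notation}. If the staircase stays inside $B(p,q)$ the whole way, it is a pure $2$-staircase and case (A) applies. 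If the first vertex after $p$ lies in $R_{p\setminus q}$, then removing $p$ leaves a valid triple $1$-staircase from that vertex to $q$, giving case (B) (and symmetrically (C) for $R_{q\setminus p}$ near $q$). The interesting case is when the staircase "escapes" the rectangle $B(p,q)$ through the region $R_{p,q}=Q_4(p)\cap Q_4(q)$: then there is a first extreme vertex $r\in R_{p,q}$, and beyond $r$ the remaining portion is a triple $1$-staircase with $r$ as a repeated endpoint (case (D), using $T_{r,r}$), unless the escape happens on the $1$-side with the subsequent vertices dropping into a $\prec'$-relation with $r$, which is exactly the scenario the auxiliary quantity $U_{p,r}$ of~(\ref{eq3}) captures via the region $R'_{p\setminus r}$ — yielding case (E). I would argue that these cases are exhaustive by examining the two staircase turns adjacent to the endpoints and using that $1$-, $2$-, $3$-extremal vertices of a set are linearly ordered along the respective staircases.

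For the complexity claim: there are $O(n^2)$ pairs $(p,q)$ with $p\prec q$ or $p=q$, and each $T_{p,q}$ in~(\ref{eq2}) is a maximum over $O(n)$ candidates drawn from the regions $R_{p\setminus q}$, $R_{q\setminus p}$, $R_{p,q}$ plus one lookup of $C_{p,q}$; similarly each $U_{p,r}$ is a maximum over $O(n)$ values. Provided the table entries are filled in an order consistent with the recurrence's dependencies — I would process pairs in order of increasing size of the box $B(p,q)$, say by a sweep that guarantees $C_{r,q}$, $T_{r,q}$, $T_{p,r}$, $T_{r,r}$, $U_{p,r}$ are all available when $T_{p,q}$ is computed — the whole table $T$ (and $U$) is built in $O(n^3)$ time and $O(n^2)$ space, matching the bound already established for $C_{p,q}$ in the first step. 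The main obstacle I anticipate is the case analysis for the reverse inequality, specifically showing that the two exotic cases (D) and (E) together cover every way the optimal triple staircase can leave $B(p,q)$, and that in case (E) the decomposition through $U_{p,r}$ does not double-count $r$ or miss extreme vertices — this is where the definitions of $R'_{p\setminus r}$ and the $\prec'/\prec$ bookkeeping have to be handled with care.
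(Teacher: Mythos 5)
Your overall strategy matches the paper's: traverse the optimal triple $1$-staircase, identify the vertices $p^-$ and $q^-$ adjacent to the endpoints $p$ and $q$, and case on which of the regions $B(p,q)$, $R_{p\setminus q}$, $R_{q\setminus p}$, $R_{p,q}$ they fall into, so that $\mathcal T_{p,q}$ is recognized as an extension of $\mathcal T_{p^-,q}$, $\mathcal T_{p,q^-}$ or $\mathcal T_{p^-,q^-}$; the complexity analysis (an order of evaluation making all needed entries available, $O(n)$ work per entry, with case \textbf{(E)} charged to the auxiliary table $U$) is also the same. However, there is a genuine gap at exactly the point you flag as ``the main obstacle'': you never establish that cases \textbf{(D)} and \textbf{(E)} exhaust the situation in which both $p^-$ and $q^-$ lie in $R_{p,q}$. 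The crux is the claim that if $p^-\neq q^-$ and both lie in $R_{p,q}$, then $p^-\prec q^-$ (note the relation is $\prec$, i.e.\ $q^-\in Q_1(p^-)$, not the $\prec'$-relation your sketch describes); this is precisely what guarantees that $q^-$ lies in the region $R'_{p\setminus p^-}$ over which $U_{p,p^-}$ maximizes in~(\ref{eq3}), so that the pair $(p^-,q^-)$ is actually discovered by case \textbf{(E)}. Without this, the recurrence could in principle miss the optimal staircase, and the ``reverse inequality'' you correctly identify as the substance of the lemma remains unproved.

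The paper closes this gap with a short double application of Observation~\ref{obs:empty}. Since $p$ and $p^-$ are consecutive $1$-extremal points of $S'\cup\{p,q\}$, the quadrant $Q_3(o)$ with $o=(p^-_x,p_y)$ contains no point of $S'$, which together with $q^-\in R_{p,q}$ rules out $q^-\in Q_2(p^-)\cup Q_3(p^-)$; and since $q$ and $q^-$ are consecutive $3$-extremal points, $Q_1(o')$ with $o'=(q_x,q^-_y)$ contains no point of $S'$, which rules out $q^-\in Q_4(p^-)$. The only remaining possibility is $q^-\in Q_1(p^-)$, i.e.\ $p^-\prec q^-$, as required. You invoke Observation~\ref{obs:empty} only for the easy direction of the recurrence; it is in fact the tool that completes the hard direction, and your proposal is not a proof until this step is supplied.
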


\begin{proof}
Let $\mathcal T_{p,q}$ be an optimal triple $1$-staircase for a pair of points $p,q \in P$ such that $p \prec q$ and let $S'\subseteq z(p,q)$ be the point set associated with $\mathcal T_{p,q}$. In the counter-clockwise traversal of the triple $1$-staircase, let $p^-$ and $q^-$ be the elements of $P$ that follow and precede $p$ and $q$, respectively. Hence, $\mathcal T_{p,q}$ can be obtained as an extension of $\mathcal T_{p,q^-}$, $\mathcal T_{p^-,q}$, or $\mathcal T_{p^-,q^-}$.

If $p^-,q^- \in B(p,q)$, then necessarily $\mathcal T_{p,q}$ is a $2$-staircase (the $1$ and $3$-staircases of $\mathcal T_{p,q}$ consist of only points $p$ and $q$, respectively), so $T_{p,q}=C_{p,q}$ and case $\textbf{\small (A)}$ is used to set $T_{p,q}$. Thus, we assume in the rest of the proof that at least one of $p^-$ and $ q^-$ is not in $B(p,q)$. See Figure~\ref{fig:regions-cases} for the cases.

If $p^-\in R_{p\setminus q}$, then we have $p^- \prec q$ and $p^-$ is not 3-extremal in $\mathcal T_{p^-,q}$. We use case $\textbf{\small (B)}$ to find a point $r$ that plays the role of $p^-$ to compute the value of $T_{p,q}$. If $q^- \in R_{q \setminus p}$, then we have $p\prec q^-$ and $q^-$ is a point that is not 1-extremal in $\mathcal T_{p,q^-}$. We use case~$\textbf{\small (C)}$ to find a point $r$ that plays the role of $q^-$ to compute the value of $T_{p,q}$.

Suppose now that $p^-$ is in $R_{p,q}$ (a similar reasoning applies if $q^-$ is in $R_{p,q}$). In this case, $q^-$ cannot be in $B(p,q)$ and if $q^- \in R_{q \setminus p}$ then we use case~$\textbf{\small (C)}$. If $q^-\in R_{p,q}$, there are two cases to analyze: $p^-=q^-$ and $p^-\neq q^-$.
If $p^- = q^-$, then we can use case $\textbf{\small (D)}$ to find a point $r$ that plays the role of $p^-$ to compute the value of $T_{p,q}$. When $p^- \neq q^-$, we prove that $p^- \prec q^-$, and then case $\textbf{\small (E)}$ can be used to find a pair of points $r$ and~$s$ playing the roles of $p^-$ and $q^-$, both in $R_{p,q}$, with maximum value $T_{r,s}$. By Observation~\ref{obs:empty}, as $p$ and $p^-$ are consecutive $1$-extremal points, then $Q_3(o)$ contains no element in $S'$, where $o=(p^-_x,p_y)$. In particular, since $q^-$ is in $R_{p,q}$, this implies that $q^-$ cannot be in either $Q_2(p^-)$ or $Q_3(p^-)$. Moreover, as $q$ and $q^-$ are consecutive $3$-extremal points, then $Q_1(o')$ contains no element in $S'$ again by Observation~\ref{obs:empty}, where $o'=(q_x,q^-_y)$. As a consequence, $q^-$ cannot be in $Q_4(p^-)$, so we conclude that $q^- \in Q_1(p^-)$.

To compute tables $T$ and $U$, we scan the elements of $P$ from right to left. Each time an element $p\in P$ is encountered, we scan all of the $q\in P$ such that $p_x < q_x$, again from right to left. When $p\prec q$ we compute $T_{p,q}$, and when $p\prec'q$ we compute $U_{p,q}$. Each entry of $T$ and $U$ is determined in $O(n)$ time. Thus, $U$ and $T$ can be computed in overall $O(n^3)$ time and $O(n^2)$ space.
Cases $\textbf{\small (A)}$ to $\textbf{\small (D)}$ are in $O(n)$. We charge the work done in case $\textbf{\small (E)}$ to constructing table $U$, which can be done in $O(n)$ time per entry. Thus the entire complexity is in $O(n^3)$ time and $O(n^2)$ space.
\end{proof}

In a totally analogous way, we can calculate triple $3$-staircases of maximum size. For $p\prec q$ or $p=q$, let $T'_{p,q}$\label{page31} be the size of the triple $3$-staircase $\mathcal T'_{p,q}$\label{page30} of maximum cardinality among all subsets $S'\cup \{p,q\}$, where $S'$ is now a subset of points of $P$ in the quadrant $Q_2(v)$ with $v=(q_x,p_y)$. After rotating the coordinates by $\pi$, observe that the triple $3$-staircase $\mathcal T'_{p,q}$ is the triple $1$-staircase $\mathcal T_{q,p}$. Thus, by symmetry with the $T_{p,q}$'s, all the $T'_{p,q}$'s can also be calculated in $O(n^3)$ time and $O(n^2)$ space.

{\bf The third step:} In this step, we show how to combine a triple $1$-staircase and a triple $3$-staircase to solve the \prob{MaxRCH} problem. Recall that the solution must be vertically separable. Next we give the definition of a $4$-separator and then we show that it is equivalent to vertical separability.

\begin{definition}
Let $S\subseteq P$ be any subset with $|S|\ge 2$. Given four (not necessarily distinct) extremal points $p,q,r,s\in S$, we say that the tuple $(p,q,r,s)$ is a \emph{$4$-separator} of $RCH(S)$ if the following five conditions are satisfied: $(1)$ $p\prec q$ or $p=q$; $(2)$ $q \prec' r$; $(3)$ $r\prec s$ or $r=s$; $(4)$ $p$ and $r$ are consecutive points in the $1$-staircase of $S$; and $(5)$ $s$ and $q$ are consecutive points in the $3$-staircase of $S$ (see Figure~\ref{fig:separators1}).
\end{definition}

\begin{figure}[ht]
	\centering
	\includegraphics[scale=0.9,page=31]{img.pdf}
	\caption{\small{Examples of $4$-separators $(p,q,r,s)$.}}\label{fig:separators1}
\end{figure}

The following lemma shows the equivalence between the existence of $4$-separators and vertical separability.

\begin{lemma}\label{lemm:4separator}
$RCH(S)$ is vertically separable if and only if $RCH(S)$ has a $4$-separator.
\end{lemma}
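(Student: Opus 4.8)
The plan is to prove both directions of the equivalence directly from the definitions, using the characterization of vertical separability established in Lemma~\ref{lem:canonical} and the geometric role played by the points $a, b, c, d$ (the leftmost, bottommost, rightmost, topmost points of $S$) as the junctions of the four staircases.

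\medskip

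\noindent\textbf{($\Leftarrow$) Existence of a $4$-separator implies vertical separability.} Suppose $(p,q,r,s)$ is a $4$-separator of $RCH(S)$. The key observation is that $p$ and $r$ being consecutive on the $1$-staircase, together with $p \prec q$, $q \prec' r$, forces the whole $1$-staircase of $S$ to lie (weakly) to the left of the vertical line through $r_x$ on its portion up to $p$, and to the right on its portion from $r$ onward; in particular the ``turn'' from $p$ to $r$ happens at the corner $(r_x, p_y)$. Symmetrically, $s$ and $q$ consecutive on the $3$-staircase with $s \succ r$, $q \prec' r$ localizes the turn of the $3$-staircase at $(p_x, s_y)$ — wait, more carefully, at the corner dictated by $q \prec' r$ and $r \prec s$. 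I would argue that these two facts together show $d_x \le r_x$ is impossible to violate on one side while $b_x \ge$ something on the other; concretely, I will show that the vertical line $x = r_x$ (or a line between $p_x$ and $r_x$, using general position) separates $B(a,d)$ from $B(b,c)$. The reason: $a$ and $d$ both lie on the $1$-staircase arc that ends at $p$ (the arc containing all $\{1,4\}$- and $\{3,4\}$-extremal points is the one not crossing the $r$-side), while $b$ and $c$ lie on the arc starting at $r$; consistency of the staircase orientation ($\prec'$-monotone) pins down that $a_x, d_x < r_x < b_x, c_x$ after possibly swapping the roles, giving $d_x < b_x$, which is exactly vertical separability.

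\medskip

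\noindent\textbf{($\Rightarrow$) Vertical separability implies a $4$-separator exists.} Assume $RCH(S)$ is vertically separable, i.e.\ $d_x < b_x$ (by the proof of Lemma~\ref{lem:canonical}). I would locate the $4$-separator on the $1$- and $3$-staircases explicitly. Traverse the $1$-staircase of $S$ from $a$ (its $\{1,4\}$-extremal endpoint) towards $c$ (its $\{2,3\}$-extremal endpoint); since this staircase is $\prec'$-monotone and since $d_x < b_x$, there is a unique pair of consecutive vertices $p, r$ of the $1$-staircase with $p_x \le d_x$ and $r_x \ge b_x$ — this is the ``gap'' straddling the separating strip. By $\prec'$-monotonicity, $p \prec' r$, hence in particular $p_x < r_x$ and $p_y > r_y$. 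Then set $q$ and $s$ on the $3$-staircase analogously: since $q$ must satisfy $q \prec' r$ and be consecutive with some $s \succ r$ on the $3$-staircase, take the consecutive pair $s, q$ of the $3$-staircase straddling the same vertical strip, so $s \prec' q$ is reversed — I mean $s$ has larger $x$ than $q$ (the $3$-staircase runs from $c$ to $a$). I then verify conditions $(1)$–$(5)$: $(2)$ $q \prec' r$ holds because $q$ is below $p$'s level (it is on the $3$-staircase) and to the left of the strip while $r$ is to the right; $(1)$ $p \prec q$ or $p = q$ and $(3)$ $r \prec s$ or $r = s$ follow from the vertical separation — $p,q$ both lie in the left rectangle-region and $r,s$ in the right one — together with the fact that the left endpoints are the $\{1,4\}$-to-$\{1,2\}$ corner points; equalities happen exactly when a pinched point occurs.

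\medskip

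\noindent\textbf{Main obstacle.} The delicate part is the case analysis around \emph{pinched points} and the possible \emph{coincidences} among $a,b,c,d$ (recall $a=d$ is allowed), which is why the $4$-separator definition explicitly permits $p=q$ and $r=s$. When, say, $a = d$, the $4$-staircase degenerates to a single point and one must check that the ``straddling consecutive pair'' on the $1$-staircase is still well-defined and that $p$ can legitimately equal $q$. Likewise, if the $1$- and $3$-staircases cross (a disconnected $RCH$ with a $\{1,3\}$-pinched point), I must confirm that the pinched point can serve simultaneously as $p$ (or $r$) on the $1$-staircase and as $q$ (or $s$) on the $3$-staircase without breaking monotonicity conditions $(1)$–$(3)$. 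I would handle this by treating the staircases as monotone chains with endpoints in $\{a,b,c,d\}$, reducing everything to a one-dimensional statement about where the $x$-coordinate $d_x$ and $b_x$ fall among the vertices, so that the existence of the separating strip is immediate and the only real work is bookkeeping the at-most-four degenerate coincidences.
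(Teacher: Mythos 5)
There is a genuine gap in your ($\Rightarrow$) direction: the pair of consecutive $1$-staircase vertices you construct need not exist. You ask for consecutive $1$-extremal points $p,r$ with $p_x\le d_x$ \emph{and} $r_x\ge b_x$, i.e.\ a single edge of the $1$-staircase spanning the entire separating strip $(d_x,b_x)$. But the $1$-staircase may have vertices with $x$-coordinates strictly inside that strip: for $a=(0,5)$, $d=(1,10)$, $t=(3,4)$, $b=(6,0)$, $c=(8,2)$, the point $t$ is $1$-extremal with $d_x=1<t_x=3<b_x=6$, so no such edge exists. The paper's construction is a \emph{two-stage} one: first take $p,r$ consecutive on the $1$-staircase with $p_x\le d_x<r_x$ (this always exists, since the staircase starts at $a$ with $a_x\le d_x$ and ends at $b$ with $b_x>d_x$), and only \emph{then} take $s,q$ consecutive on the $3$-staircase straddling the line $x=r_x$, i.e.\ $q_x<r_x\le s_x$ --- not the same strip, and not the line $x=d_x$. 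This choice matters: one can arrange a $3$-extremal point with $x$-coordinate strictly between $d_x$ and $r_x$, and then the pair straddling $x=d_x$ yields $s_x<r_x$, violating condition $(3)$. Your phrase ``straddling the same vertical strip'' does not resolve this ambiguity, and the only reading under which the object always exists is the one that can fail condition $(3)$.

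Two further problems. First, conditions $(1)$ and $(3)$ do not ``follow from the vertical separation'' alone: to get $p\prec q$ one must show $q_y>p_y$, which comes from Observation~\ref{obs:empty} applied to the consecutive $1$-extremal pair $p,r$ (no point of $S$ lies in $Q_3(o)$ for $o=(r_x,p_y)$, and since $q_x<r_x$ this forces $q_y>p_y$); the symmetric emptiness argument gives $r\prec s$. Your proposal never invokes this emptiness property. Second, your ($\Leftarrow$) direction reaches the correct inequality $d_x<b_x$ through a false intermediate claim: $d$ and $c$ do not in general lie on the $1$-staircase ($d$ is only guaranteed to be $\{3,4\}$-extremal), so the ``arc'' bookkeeping does not work as stated. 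The clean argument is: $q$ is a vertex of the $3$-staircase, whose leftmost vertex is $d$, so $d_x\le q_x$; $r$ is a vertex of the $1$-staircase, whose rightmost vertex is $b$, so $r_x\le b_x$; and $q\prec' r$ gives $q_x<r_x$, whence $d_x<b_x$.
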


\begin{proof}
Let us first assume that $RCH(S)$ is vertically separable, that is, rectangles $B(a,d)$ and $B(b,c)$ are separated by a vertical line. Recall that $a$, $b$, $c$, and $d$ are the leftmost, bottommost, rightmost, and topmost points of $S$, respectively. Then, we can argue the following: If $a\prec d$, then $S$ has at least one $1$-extremal point to each side of the vertical line through $d$. Otherwise, if $a=d$, then $S$ has at least one $1$-extremal point to the right side. Thus, covering both cases, let $p$ and $r$ be the two consecutive $1$-extremal points of $S$ such that $p_x\le d_x < r_x$. Now, given that $d_x<r_x$, we have that: If $r$ is also $3$-extremal, thus a pinched point, then $S$ has at least one $3$-extremal point to the left side of the vertical line through $r$. Otherwise, if $r$ is not $3$-extremal, then $S$ has at least one $3$-extremal point to each side of this line. Thus, we can define $s$ and $q$ as the two consecutive $3$-extremal points of $S$ such that $q_x < r_x\le s_x$. It is straightforward to see now that $(p,q,r,s)$ is a $4$-separator of $RCH(S)$. Note that, when $p=a=d$, then necessarily $q=p=a=d$.

Assume now that $(p,q,r,s)$ is a $4$-separator of $RCH(S)$. We then have that: $d\prec'q$ or $d=q$, and $r\prec'b$ or $r=b$. These conditions, together with $q\prec' r$, directly imply that rectangles $B(a,d)$ and $B(b,c)$ are separated by a vertical line, thus $RCH(S)$ is vertically separable.
\end{proof}

Using $4$-separators, we show how to find an optimal solution that is vertically separable. Among all subsets $S$ of $P$ such that $RCH(S)$ is vertically separable, let $S_0$ be a subset of $P$ such that $RCH(S_0)$ has maximum size. Let $(p,q,r,s)$ be a $4$-separator of $RCH(S_0)$. The key observation is that the vertices of $\mathcal{T}'_{p,q}\cup \mathcal{T}_{r,s}$ are the set of extremal points of $S_0$. Note that $\mathcal{T}'_{p,q}\cap \mathcal{T}_{r,s}=\emptyset$ and $|RCH(S_0)|$, the size of $RCH(S_0)$, is $T'_{p,q}+T_{r,s}$.

Thus, we proceed as follows: For given $p,s\in P$ such that $p_x<s_x$, let $\mathcal{S}_{p,s}$\label{page32} be the rectilinear convex hull of maximum size, among all subsets $S\subseteq P$ containing $p$ and $s$ such that there exist two points $q,r\in S$ with $(p,q,r,s)$ being a $4$-separator of $RCH(S)$. Let $S_{p,s}$\label{page33} be the size of $\mathcal{S}_{p,s}$. Note that $S_{p,s} = T'_{p,q}+T_{r,s}$ for some $4$-separator $(p,q,r,s)$. Then, the following equations allow us to calculate $|RCH(S_0)|$ in $O(n^3)$ time and $O(n^2)$ space, as Theorem~\ref{the:MaxRCH} proves:
\begin{equation}\label{eqopt1}
    |RCH(S_0)| ~=~ \max\{S_{p,s}\}~\text{over~all}~ p,s\in P ~\text{such that}~ p_x < s_x
\end{equation}
where for each pair of points $p,s\in P$ with $p_x < s_x$
\begin{equation}\label{eqopt2}
    S_{p,s} ~=~ \max\{T'_{p,q}+T_{r,s}\}~\text{over~all}~ q,r\in P ~\text{such that $(p,q,r,s)$ is a $4$-separator}.
\end{equation}

\begin{theorem}\label{the:MaxRCH}
    The \prob{MaxRCH} problem can be solved in $O(n^3)$ time and $O(n^2)$ space.
\end{theorem}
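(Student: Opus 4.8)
The plan is to assemble the three steps into a single algorithm and prove that it computes $|RCH(S_0)|$ --- and hence, via standard back-pointer bookkeeping, an optimal subset $S$ --- within the claimed bounds. By Lemma~\ref{lem:canonical} we may assume the optimal set $S_0$ is vertically separable, and then Lemma~\ref{lemm:4separator} provides a $4$-separator $(p,q,r,s)$ of $RCH(S_0)$. The heart of the proof is the structural fact stated just before the theorem: for such a $4$-separator, the vertices of $\mathcal T'_{p,q}$ together with those of $\mathcal T_{r,s}$ are exactly the points of $S_0$ that are $i$-extremal for some $i\in\{1,2,3,4\}$, the two vertex sets are disjoint, and consequently $|RCH(S_0)| = T'_{p,q}+T_{r,s}$. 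Granting this, $|RCH(S_0)|$ is at most the value of~\eqref{eqopt1}; conversely, for every $4$-separator $(p,q,r,s)$ the quantity $T'_{p,q}+T_{r,s}$ is realized as $|RCH(S)|$ by the subset $S\subseteq P$ obtained as the union of the point sets behind $\mathcal T'_{p,q}$ and $\mathcal T_{r,s}$, so \eqref{eqopt1}--\eqref{eqopt2} compute $|RCH(S_0)|$ exactly.

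To prove the structural fact I would proceed as follows. Conditions $(1)$--$(3)$ of a $4$-separator force $p_x<q_x<r_x<s_x$; since every vertex of $\mathcal T'_{p,q}$ has $x$-coordinate at most $q_x$ and every vertex of $\mathcal T_{r,s}$ has $x$-coordinate at least $r_x$, the two vertex sets are disjoint, and a vertex of one of them can lose its extremality in $S_0$ only through domination by a point of the other lying strictly to its right, respectively left. Conditions $(4)$--$(5)$, read through Observation~\ref{obs:empty}, pin down exactly which open quadrant around the junction of the two chains is free of points of $S_0$; combining this with the fact --- immediate from the regions used in the definitions of $T'_{p,q}$ and $T_{r,s}$ --- that $p$ is the bottommost and $q$ the rightmost vertex of $\mathcal T'_{p,q}$ while $r$ is the leftmost and $s$ the topmost vertex of $\mathcal T_{r,s}$, a short case analysis shows that no vertex of either triple staircase loses its extremality in $S_0$; in particular $r$ stays $1$-extremal (it would otherwise be dominated by $p$) and $q$ stays $3$-extremal. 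A standard exchange argument then shows that these vertices exhaust the $i$-extremal points of $S_0$: were the left portion not a triple $3$-staircase of maximum cardinality, one could substitute a larger one while preserving vertical separability, contradicting the optimality of $S_0$ (and symmetrically for the right portion). Hence the concatenation of $\mathcal T'_{p,q}$ with $\mathcal T_{r,s}$ is exactly the union of the four staircases of $S_0$ --- with pinched points and a possibly disconnected $RCH(S_0)$ accounted for correctly --- and the sizes add.

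For the running time, Step~1 fills the table of all $C_{p,q}$ in $O(n^3)$ time and $O(n^2)$ space via~\eqref{eq1}, and Step~2 fills the tables of all $T_{p,q}$ and $U_{p,r}$, and --- by the analogous argument after rotating the plane by $\pi$ --- all $T'_{p,q}$, in $O(n^3)$ time and $O(n^2)$ space by Lemma~\ref{lem:step2}. The only delicate point is evaluating \eqref{eqopt1}--\eqref{eqopt2} within budget, since iterating naively over $(p,s)$ and then over $(q,r)$ costs $\Theta(n^4)$. Here I would use a simplification that also follows from the analysis above: given $p\prec q$ or $p=q$, $q\prec' r$, and $r\prec s$ or $r=s$, the tuple $(p,q,r,s)$ is a $4$-separator of the union of the point sets behind $\mathcal T'_{p,q}$ and $\mathcal T_{r,s}$ if and only if, in addition, $p_y>r_y$ and $q_y>s_y$ --- the emptiness parts of conditions $(4)$--$(5)$ being automatic once $q_x<r_x$ separates the two $x$-ranges. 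With this, $\max\{T'_{p,q}+T_{r,s}\}$ over $4$-separators equals the maximum, over the $O(n^2)$ pairs $(q,r)$ with $q\prec' r$, of $\bigl(\max\{T'_{p,q} : p\prec q\text{ or }p=q,\ p_y>r_y\}\bigr)+\bigl(\max\{T_{r,s} : r\prec s\text{ or }r=s,\ s_y<q_y\}\bigr)$, and each of the two inner maxima is obtained by a single linear scan, over $p$ and over $s$ respectively, using the tables $T'$ and $T$. This is $O(n^3)$ time and $O(n^2)$ space, so the three steps together run in $O(n^3)$ time and $O(n^2)$ space.

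The step I expect to be the main obstacle is the structural fact of the first paragraph: verifying that gluing the two maximum triple staircases along a $4$-separator counts every $i$-extremal point of $S_0$ once and only once --- neither dropping a vertex that survives in the larger set nor overcounting one that the opposite side renders non-extremal --- precisely in the regime where $RCH(S_0)$ may be disconnected and may contain pinched points. The triple-staircase machinery, the definition of a $4$-separator, Observation~\ref{obs:empty}, and vertical separability are all arranged to make this bookkeeping come out exactly, but going through it carefully, including the degenerate configurations such as $p=q$, $r=s$, or $a=d$, is where the actual work lies.
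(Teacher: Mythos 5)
Your proof is correct, and its first two steps, together with the reduction via Lemmas~\ref{lem:canonical} and~\ref{lemm:4separator} and the gluing of $\mathcal{T}'_{p,q}$ with $\mathcal{T}_{r,s}$ along a $4$-separator, coincide with the paper's; the differences are that you actually argue the ``key observation'' the paper only asserts, and that you evaluate \eqref{eqopt1}--\eqref{eqopt2} by a different decomposition. The paper fixes the outer pair $(p,s)$, merges the candidate sets $Q_p$ and $Q_s$ into one $x$-sorted list $L_{p,s}$, and sweeps it once maintaining the prefix-best $q^*_t$, so each $r\in Q_s$ is paired with its best admissible $q$ in amortized constant time, giving $O(n)$ per pair and $O(n^3)$ overall. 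You instead reduce conditions $(4)$--$(5)$ for the glued set to the two inequalities $p_y>r_y$ and $q_y>s_y$ --- which is right: with $p_x\le q_x<r_x\le s_x$ the two vertex sets are $x$-separated, and the binding cases are whether some point of the right set lies in $Q_1(q)$ and some point of the left set lies in $Q_3(r)$, which happens exactly when $s_y>q_y$, respectively $p_y<r_y$ --- so that for a fixed inner pair $(q,r)$ the constraints on $p$ and on $s$ decouple and the maximum splits into two independent linear scans over the tables $T'$ and $T$. Both organizations meet the $O(n^3)$ time and $O(n^2)$ space bounds. Your version is arguably cleaner here, and your explicit check that no vertex of either triple staircase loses extremality in the union (so the sizes genuinely add, even with pinched points and a disconnected hull) fills in a step the paper leaves implicit; the paper's $(p,s)$-centred sweep, however, is the form that carries over to Section~\ref{sec:EmptyRCH}, where the additional constraint $P(u(r),v(q))=\emptyset$ (and the corresponding area/weight terms) couples $q$ and $r$, so your two inner maxima would no longer separate there.
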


\begin{proof}
According to Equations~(\ref{eqopt1}) and~(\ref{eqopt2}), we only need to show how to compute $S_{p,s}$ in linear time, for given $p$ and $s$. Let $Q_p$ be the set of all points $q\in P$ such that $q\prec's$, and $p\prec q$ or $p=q$. Let $Q_s$ be the set of all points $r\in P$ such that $p\prec'r$, and $r\prec s$ or $r=s$. Note that $Q_p\cap Q_s=\emptyset$ and that $p$ and $s$ belong to $Q_p$ and $Q_s$, respectively, only when $p_y > s_y$. Let $L_{p,s}$ be the list of the elements of $Q_p\cup Q_s$ sorted by $x$-coordinate. Observe that if $(p,q,r,s)$ is a $4$-separator of $RCH(S_{p,s})$, then $r\in Q_s$, $q\in Q_p$, and $q$ is the point $q^*\in Q_p$ from the beginning of $L_{p,s}$ to $r$ such that $T'_{p,q^*}+T_{r,s}=\max\{T'_{p,q'}+T_{r,s}\}$ over all $q'\in Q_p$ from the beginning of $L_{p,s}$ to $r$.

We calculate $S_{p,s}$ by processing the elements of $L_{p,s}$ in order. For an element $t$ in $L_{p,s}$, let $q_t^*$ be the point in $Q_p$ maximizing $T'_{p,q'}$ over all $q'\in Q_p$ from the beginning of $L_{p,s}$ to $t$ (including $t$). When processing a point $t$, observe that if $t\in Q_p$, then $q_t^*$ is either $t$ or $q_{t-1}^*$. Otherwise, if $t\in Q_s$, then $q_t^*=q_{t-1}^*$. Moreover, if $t\in Q_s$, then we set $q=q_t^*$ and $r=t$, and consider $(p,q,r,s)$ as a \emph{feasible} $4$-separator of $\mathcal{S}_{p,s}$. After processing the last element of $L_{p,s}$, among all the (linear number) feasible separators, we return the solution $S_{p,s}$ induced by the feasible separator that maximizes $T'_{p,q}+T_{r,s}$. Thus, $S_{p,s}$ can be calculated in $O(n)$ time, once tables $T$ and $T'$ have been constructed.
\end{proof}

\section{Maximum size/area empty rectilinear convex hulls and maximum weight rectilinear convex hull}\label{sec:EmptyRCH}

In this section, we show how to adapt the algorithm of Section~\ref{sec:MaxRCH} to solve the \prob{MaxEmptyRCH}, the \prob{MaxAreaRCH} and the \prob{MaxWeightRCH} problems. The first observation is that, as the optimal solution for any of these problems is the rectilinear convex hull of a subset $S$ of $P$, then Lemmas~\ref{lem:canonical} and~\ref{lemm:4separator} hold. This implies that we can assume that rectangles $B(a,d)$ and $B(b,c)$ are separated by a vertical line, so a 4-separator exists for the optimal solution in any of the problems. As a consequence, the algorithms to solve these three problems follow the same scheme as the algorithm described in the previous section, and we only need to show how to adapt in each problem the calculation of the $2$-staircases, the triple $1$- and $3$-staircases and the rectilinear convex hulls $\mathcal{S}_{p,s}$ to fulfill the requirements on emptiness, area or weight.

We start by solving the \prob{MaxEmptyRCH} problem, we continue with the \prob{MaxAreaRCH} problem and we finish with the \prob{MaxWeightRCH} problem.

\subsection{Maximum size empty rectilinear convex hull}\label{sec:EmptyRectCH_1}

To solve the \prob{MaxEmptyRCH} problem in $O(n^3)$ time and $O(n^2)$ space, we modify the steps of our previous algorithm. These modifications ensure that the ``interiors'' of the triple $1$- and $3$-staircases and the rectilinear convex hulls are empty. Recall that in this problem we are looking for a subset $S\subseteq P$ such that $RCH(S)$ has maximum size and there is no element of $P$ in the interior of $RCH(S)$.

{\bf The first step:} For a pair of points $p,q\in P$ such that $p\prec q$ or $p=q$, we say that the $2$-staircase associated with a subset $S$ of $P(p,q)$ is {\em empty} (recall that $P(p,q)$ is the set of points in $P$ that belong to the rectangle $(B(p,q)$) if no point of $P$ is in the interior of $B(p,q)\cap M_2(S\cup \{p,q\})$, see Figure~\ref{fig:Ci-empty}, left.

\begin{figure}[ht]
	\centering
    \includegraphics[scale=0.9,page=40]{img.pdf}%[scale=0.85,page=36]
	\caption{\small{Left: Example of $\mathcal{C}_{p,q}$ when the $2$-staircases must be empty. Right: Third step of the algorithm when the interior must be empty of elements of $P$.}}\label{fig:Ci-empty}
\end{figure}

Let $\mathcal C_{p,q}$\label{page34} be the empty $2$-staircase of maximum cardinality over all subsets $S\cup \{p,q\}$ with $S\subseteq P(p,q)$, and let $C_{p,q}$ be the size of $\mathcal C_{p,q}$. Observe that if $u$ is the point $(p_x,q_y)$ and $r\in P$ is the vertex of the $2$-staircase that follows $p$, then $P(r,u)=\emptyset$. Thus, values $C_{p,q}$ can be computed using the following recurrence:
\begin{equation}\label{eq1b}
    C_{p,q} ~=~ \left\{
            \begin{array}{ll}
                1 & \mathbf{if~} p=q \\
                \max\{1+C_{r,q}\}~\text{over~all}~r\in P(p,q)\cup \{q\}~\text{such~that}~P(r,u)=\emptyset &
	                \mathbf{if~} p\neq q. \\
            \end{array}
        \right.
\end{equation}
As $q$ and $u$ are on the same horizontal line, $P(q,u)$ is not defined. In this case, we assume that $P(q,u)$ is the empty set. Using standard range counting techniques~\cite{preparata2012computational}, we can preprocess the grid $G$ generated by the vertical and horizontal lines through the elements of $P$ in $O(n^2)$ time and space, so that for every pair of vertices $u,v$ of $G$ we can query the number of points in $P(u,v)$ in $O(1)$ time. Thus, we can decide whether $P(u,v)=\emptyset$ in $O(1)$ time. Therefore, values $C_{p,q}$ can be calculated in $O(n^3)$ time and $O(n^2)$ space.

{\bf The second step:} For every $p,q\in P$ such that $p\prec q$ or $p=q$, we say that the triple $1$-staircase $\mathcal{T}$ corresponding to a subset $S$ of $z(p,q)$ is \emph{empty} (recall that $z(p,q)$ is the set of points in $P$ that belong to $Z(p,q)$, the fourth quadrant associated with point $(p_x,q_y)$) if the (disconnected) region $\mathcal{O}_{\mathcal{T}}=\overline{Z(p,q)}\cap M_1(S\cup \{p,q\})\cap M_2(S\cup \{p,q\})\cap M_3(S\cup \{p,q\})$\label{page35} associated with $\mathcal{T}$ contains no element of $P$. Please, refer to Figure~\ref{fig:T}, where the shaded areas correspond to $\mathcal{O}_{\mathcal{T}}$. Let $\mathcal T_{p,q}$\label{page36} be the empty triple $1$-staircase of maximum size among all subsets $S\cup \{p,q\}$, with $S\subseteq z(p,q)$. Let $E(p,q,r)$\label{page37} denote the interior of $B(p,q)\cap Q_2(r)$ if $p\ne q$, and the empty set if $p=q$. In Figure~\ref{fig:regions-cases} examples of open rectangles $E(p,q,r)$ are shown as shaded rectangles . We show how to compute $T_{p,q}$\label{page38}, the cardinality of the set of extreme vertices of $\mathcal T_{p,q}$, using the following equations that are similar to Equations~(\ref{eq2}) and~(\ref{eq3}):
\begin{equation}\label{eq2b}
     \arraycolsep=1.4pt\def\arraystretch{1.2}
     T_{p,q} ~=~ \max\left\{
         \begin{array}{lr}
             C_{p,q} & \textbf{\small (A)} \\
             1+T_{r,q}~\text{over~all}~ r\in R_{p\setminus q}~:~P\cap E(p,q,r)=\emptyset  & \textbf{\small (B)}\\
             1+T_{p,r}~\text{over~all}~ r\in R_{q \setminus p}~:~P\cap E(p,q,r)=\emptyset   & \textbf{\small (C)} \\
             \alpha_{p,q}+T_{r,r}~\text{over~all}~ r \in R_{p,q}~:~P\cap E(p,q,r)=\emptyset &~~\textbf{\small (D)} \\
             \alpha_{p,q}+U_{p,r}~\text{over~all}~ r \in R_{p,q}~:~P\cap E(p,q,r)=\emptyset & \textbf{\small (E)} \\
         \end{array}
     \right.
\end{equation}
where for every pair $p,r\in P$ such that $p\prec' r$
\begin{equation}\label{eq3b}
    U_{p,r} ~=~ \max\{T_{r,s}\}~\text{over~all}~ s\in R'_{p\setminus r}.
\end{equation}

In case $\textbf{\small (A)}$, $\mathcal{O}_{\mathcal T_{p,q}}$ is empty as $\mathcal C_{p,q}$ is an empty 2-staircase.  Equation~\eqref{eq2b} is obtained from Equation~\eqref{eq2} by further constraining $r$ in the cases from $\textbf{\small (B)}$ to $\textbf{\small (E)}$ to satisfy $P\cap E(p,q,r)=\emptyset$. This guarantees that the interior of $\mathcal{O}_{p,q}$ is empty of elements of $P$ for all $p,q$. Verifying that $P\cap E(p,q,r)=\emptyset$ can be decided in $O(1)$ time by using a range counting query. The proof of correctness of Equations~(\ref{eq2b}) and~(\ref{eq3b}) follows the same steps as in Lemma~\ref{lem:step2}. Hence, computing the new table $T$ can be done in $O(n^3)$ time and $O(n^2)$ space. By symmetry, values in $T'_{p,q}$, the sizes of the empty triple $3$-staircases $\mathcal T'_{p,q}$\label{page39}, can also be calculated in $O(n^3)$ time and $O(n^2)$ space.

{\bf The third step:} For given $p,s\in P$ such that $p_x<s_x$, let $\mathcal{S}_{p,s}$\label{page40} be the \emph{empty} rectilinear convex hull of maximum size, among all subsets $S\subseteq P$ containing $p$ and $s$ such that $RCH(S)$ is empty and there exist two points $q,r\in S$ with $(p,q,r,s)$ being a $4$-separator of $RCH(S)$. Let $S_{p,s}$ denote the size of $\mathcal{S}_{p,s}$. To compute $S_{p,s}$ we have to distinguish whether $p\prec' s$ or $p\prec s$. If $p\prec' s$ (see Figure~\ref{fig:separators1}, top-right, or any in the bottom), then $S_{p,s}=\max \{ T'_{p,q}+T_{r,s}\}$ over all $4$-separators $(p,q,r,s)$. Otherwise, if $p\prec s$ (see Figure~\ref{fig:Ci-empty}, right), then we must ensure that each $4$-separator $(p,q,r,s)$ satisfies the emptiness of the rectangle $B(u,v)$ (that is, $P(u,v)=\emptyset$), where $u=(r_x,s_y)$ and $v=(q_x,p_y)$.

If $S_0$ is a subset of $P$ such that $RCH(S_0)$ is empty, vertically separable and of maximum size, the new equations to compute $|RCH(S_0)|$ are:
\begin{equation}\label{eqopt1b}
    |RCH(S_0)| ~=~ \max\{S_{p,s}\}~\text{over~all}~ p,s\in P ~\text{with}~ p_x < s_x
\end{equation}
where for each pair of points $p,s\in P$ such that $p_x < s_x$
\begin{equation}\label{eqopt2b}
     S_{p,s} ~=~ \left\{
         \begin{array}{lr}
             \max\{T'_{p,q}+T_{r,s}\}~\text{: $(p,q,r,s)$ is a $4$-separator} & p\prec' s \\
             \max\{T'_{p,q}+T_{r,s}\}~\text{: $(p,q,r,s)$ is a $4$-separator, $P(u,v)=\emptyset$} & p\prec s. \\
         \end{array}
         \right.
\end{equation}

\begin{theorem}\label{the:MaxEmptyRCH}
The \prob{MaxEmptyRCH} problem can be solved in $O(n^3)$ time and $O(n^2)$ space.
\end{theorem}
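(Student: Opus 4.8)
The plan is to reuse the three-step scheme of Section~\ref{sec:MaxRCH} verbatim, now with the recurrences~\eqref{eq1b}, \eqref{eq2b}--\eqref{eq3b} and~\eqref{eqopt1b}--\eqref{eqopt2b} replacing~\eqref{eq1}, \eqref{eq2}--\eqref{eq3} and~\eqref{eqopt1}--\eqref{eqopt2}, and then to check only two things: that the modified recurrences are correct, and that each of the three steps still runs in $O(n^3)$ time and $O(n^2)$ space. As observed at the start of Section~\ref{sec:EmptyRCH}, the optimum $S_0$ is again a rectilinear convex hull, so Lemmas~\ref{lem:canonical} and~\ref{lemm:4separator} apply and we may assume $RCH(S_0)$ is vertically separable and hence carries a $4$-separator $(p,q,r,s)$; as in the first step I would also preprocess the grid $G$ for $O(1)$-time range-emptiness queries in $O(n^2)$ time and space.

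For the first two steps I would simply re-run the proofs of~\eqref{eq1} and of Lemma~\ref{lem:step2}, carrying one extra invariant: the region bounded by the current partial (triple) staircase is kept free of points of $P$. In the first step, prepending $p$ to a $2$-staircase $\mathcal C_{r,q}$ enlarges its region $B(\cdot)\cap M_2(\cdot)$ by exactly the open rectangle $B(r,u)$ with $u=(p_x,q_y)$ (the ``reflex notch'' at $p$, cf.\ Observation~\ref{obs:empty}), so emptiness is preserved precisely when $P(r,u)=\emptyset$, which is the condition added in~\eqref{eq1b}. In the second step, each of the cases $\textbf{\small (B)}$--$\textbf{\small (E)}$ in the proof of Lemma~\ref{lem:step2} enlarges $\mathcal{O}_{\mathcal{T}}$ by exactly the open rectangle $E(p,q,r)$, so appending the new vertex (or pair of vertices) preserves emptiness precisely when $P\cap E(p,q,r)=\emptyset$, which is the condition added in~\eqref{eq2b}; case $\textbf{\small (A)}$ needs nothing extra because $\mathcal C_{p,q}$ is already an empty $2$-staircase. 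Each added test is one $O(1)$ query, so the cost per table entry stays $O(n)$, and both modified tables $T$, $T'$ are still filled in $O(n^3)$ time and $O(n^2)$ space, using the same right-to-left scan as in Lemma~\ref{lem:step2}.

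The only genuinely new ingredient is the third step, and I expect its geometric core to be the main obstacle. Generalising the key observation preceding Theorem~\ref{the:MaxRCH} --- that for a $4$-separator $(p,q,r,s)$ of $RCH(S_0)$ the extremal points of $S_0$ are exactly the vertices of $\mathcal T'_{p,q}\cup\mathcal T_{r,s}$ and $|RCH(S_0)|=T'_{p,q}+T_{r,s}$ --- I need the following: for such a $4$-separator, the interior of $RCH(S_0)$ is $P$-free if and only if $(i)$ the triple-$3$ region $\mathcal{O}_{\mathcal{T}'_{p,q}}$ of $S_0$ is $P$-free, $(ii)$ the triple-$1$ region $\mathcal{O}_{\mathcal{T}_{r,s}}$ of $S_0$ is $P$-free, and $(iii)$ when $p\prec s$, the open rectangle $B(u,v)$ with $u=(r_x,s_y)$ and $v=(q_x,p_y)$ is $P$-free. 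The argument I would give: the vertical line $\ell$ separating $B(a,d)$ from $B(b,c)$ cuts $RCH(S_0)=\bigcap_i M_i(S_0)$ into a left part delimited by the $3$-, $4$- and $1$-staircases of $S_0$, which is $\mathcal{O}_{\mathcal{T}'_{p,q}}$, and a right part delimited by the $1$-, $2$- and $3$-staircases, which is $\mathcal{O}_{\mathcal{T}_{r,s}}$; the only region of $RCH(S_0)$ not captured by these two is $B(u,v)$, and a short case analysis on the relative position of $p$ and $s$ shows that $B(u,v)$ is a genuine interior piece of $RCH(S_0)$ exactly when $p$ lies strictly below $s$ (then $p$ is below and to the left, and $s$ above and to the right of it, so it lies inside all four $M_i(S_0)$, while the $1$- and $3$-staircases of $S_0$ turn at $r$ and at $q$ and touch $B(u,v)$ only at its corners), whereas for $p\prec' s$ it is absorbed into the left and right parts. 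The delicate points to nail down are that $\ell$ splits $RCH(S_0)$ into exactly these three pieces, that no vertex other than $p,q,r,s$ is double-counted in $T'_{p,q}+T_{r,s}$, and that $B(u,v)$ is the entire gap near $\ell$. Granting this, $\mathcal{S}_{p,s}$ records an empty rectilinear convex hull iff the pieces contributed by the two triple staircases are already empty (guaranteed by steps one and two) and, when $p\prec s$, $P(u,v)=\emptyset$ --- which is exactly the distinction made in~\eqref{eqopt2b}.

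Finally I would bound the third step. For each of the $O(n^2)$ ordered pairs $p,s$ with $p_x<s_x$ I sweep, as in the proof of Theorem~\ref{the:MaxRCH}, the list $L_{p,s}$ of $Q_p\cup Q_s$ in $x$-order; when $p\prec' s$ nothing changes and $S_{p,s}$ is found in $O(n)$ time exactly as before. When $p\prec s$, for a candidate $r\in Q_s$ the admissible partners $q\in Q_p$ are those with $q_x<r_x$ for which $B\big((r_x,s_y),(q_x,p_y)\big)$ is $P$-free; this forbids precisely the $q$'s whose $x$-coordinate is at most the largest $x$-coordinate of a point of $P$ lying in the horizontal strip $p_y<y<s_y$ to the left of $r$, and that threshold is nondecreasing as $r$ moves right along $L_{p,s}$. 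So I would merge those strip points into the same $x$-sorted pass and maintain the maximum of $T'_{p,q}$ over the current admissible window with a monotone deque ($O(1)$ amortized per step), so $S_{p,s}$ is still obtained in $O(n)$ time. Over all pairs this is $O(n^3)$ time and $O(n^2)$ space, matching steps one and two; storing back-pointers in $T$, $T'$ and in the sweeps recovers an optimal $S_0$ within the same bounds, which completes the proof that \prob{MaxEmptyRCH} is solvable in $O(n^3)$ time and $O(n^2)$ space.
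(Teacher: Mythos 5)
Your proposal is correct and follows essentially the same route as the paper: the same modified recurrences for the three steps, the same observation that Lemmas~\ref{lem:canonical} and~\ref{lemm:4separator} still apply, and the same $O(n)$-per-pair sweep of $L_{p,s}$ in the third step (your sliding-window maximum over the $q$'s admissible for the current $r$ is equivalent to the paper's device of resetting $q_t^*$ to $\nil$ whenever a blocking point of $Q_{p,s}=\{t\in P: p\prec t\prec s\}$ is encountered). The only difference is that you make explicit the decomposition of $RCH(S_0)$ into the two triple-staircase regions plus, when $p\prec s$, the rectangle $B(u,v)$, which the paper states without proof; that added care is welcome but does not change the argument.
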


\begin{proof}
Again, we only need to show that given points $p$ and $s$ $S_{p,s}$ can be computed in linear time. When $p\prec' s$, we argue as in the proof of Theorem~\ref{the:MaxRCH}. However, when $p\prec s$, we need to only consider $4$-separators such that $P(u,v)=\emptyset$. Let $Q_{p,s}=\{t\in P: p\prec t\prec s\}$, which satisfies $Q_{p,s}\cap Q_p=\emptyset$ and $Q_{p,s}\cap Q_s=\emptyset$. Recall that $Q_p\cap Q_s=\emptyset$ and that, when $p\prec s$, $Q_p=\{ q\in P: p\prec q, q\prec' s\}$ and $Q_s=\{r\in P: r\prec s, p\prec' r\}$. Let $L_{p,s}$ be the list of the points of $Q_p\cup Q_s\cup Q_{p,s}$ sorted by $x$-coordinate. Assuming that we have already sorted $P$ by $x$-coordinate $L_{p,s}$ is obtained in $O(n)$ time.

As before, we calculate $S_{p,s}$ by processing  the elements of $L_{p,s}$ in order. For an element $t$ in $L_{p,s}$, let $q_t^*$ be the point in $Q_p$ maximizing $T'_{p,q'}$ over all $q'\in Q_p$ from the beginning of $L_{p,s}$ to $t$ (including $t$) subject to there being no elements of $Q_{p,s}$ in $L_{p,s}$ from $q'$ to $t$. When processing a point $t\in Q_{p,s}$, we set $q_t^*=\nil$ denoting that $q_t^*$ is undefined. Observe that, when processing a point $t\in Q_p$, if $q_{t-1}^*=\nil$ then $q_t^*=t$, and if $q_{t-1}^*\ne\nil$ then $q_t^*$ is either $t$ or $q_{t-1}^*$. When processing a point $t\in Q_s$, then $q_t^*=q_{t-1}^*$, and if $q_{t-1}^*\ne \nil$, then we set $q=q_t^*$ and $r=t$, and consider $(p,q,r,s)$ as a feasible $4$-separator of $\mathcal{S}_{p,s}$. Note that for this $4$-separator we have $P(u,v)=\emptyset$. After processing all elements in $L_{p,s}$, $S_{p,s}$ is determined by a feasible $4$-separator that maximizes $T'_{p,q}+T_{r,s}$.
\end{proof}

\subsection{Maximum area empty rectilinear convex hull}\label{sec:EmptyRectCH_2}

In the \prob{MaxAreaRCH} problem, we determine an empty rectilinear convex hull of maximum area. To solve this problem, we proceed as in the previous subsection. The only difference is that we sum areas in all of our recurrences, instead of counting points. Given a bounded set $Z\subset\mathbb{R}^2$, we denote the area of $Z$ as $\area(Z)$.

Now, $\mathcal C_{p,q}$\label{page41}, $\mathcal T_{p,q}$\label{page42}, $\mathcal T'_{p,q}$\label{page43}, and $\mathcal S_{p,q}$\label{page44} are as described in Section~\ref{sec:EmptyRectCH_1}, with the difference that they maximize area instead of maximizing size. The areas are defined as follows. If $S=\{p,v_2, \ldots , v_{k-1},q\}$ is the set of vertices of an empty $2$-staircase, we define the area of this staircase as $\area(B(p,q)\cap M_2(S))$. For an empty triple $1$-staircase or an empty triple $3$-staircase $\mathcal T$, its area is the area of its associated region $\mathcal O_{\mathcal T}$. The area of a rectilinear convex hull is the area of its interior.

{\bf The first step:} For a pair of points $p,q\in P$ such that $p\prec q$ or $q=p$, we compute $C_{p,q}$, the area of $\mathcal C_{p,q}$, using the following recurrence, which is a variant of Equation~(\ref{eq1b}) maximizing area:
\begin{equation}\label{eq1area}
    C_{p,q} ~=~ \left\{
            \begin{array}{ll}
                0 & \mathbf{if~} p=q \\
                \max_{r\in P(p,q)\cup \{q\}}\{\area (B(r,u))+C_{r,q}\}~:~P(r,u)=\emptyset &
	                \mathbf{if~} p\neq q. \\
            \end{array}
        \right.
\end{equation}
where $u=(p_x,q_y)$. As $B(q,u)$ is not defined, we set $\area (B(q,u))=0$.

{\bf The second step:} For every $p,q\in P$ such that $p\prec q$ or $p=q$, let $T_{p,q}$\label{page45} be the area of $\mathcal T_{p,q}$. All $T_{p,q}$'s can be calculated in $O(n^3)$ time and $O(n^2)$ space using the following equations, which are variants of Equations~(\ref{eq2b}) and~(\ref{eq3b}) maximizing area (recall that if $p=q$, then rectangle $E(p,q,r)=\emptyset$, so its area is $0$):
\begin{equation}\label{eq2area}
     \arraycolsep=1.4pt\def\arraystretch{1.2}
     T_{p,q} ~=~ \max\left\{
         \begin{array}{lr}
             C_{p,q} & \textbf{\small (A)} \\
             \area(E(p,q,r))+T_{r,q}~\text{over~all}~ r\in R_{p\setminus q}~:~P\cap E(p,q,r)=\emptyset  & \textbf{\small (B)}\\
             \area(E(p,q,r))+T_{p,r}~\text{over~all}~ r\in R_{q \setminus p}~:~P\cap E(p,q,r)=\emptyset   & \textbf{\small (C)} \\
             \area(E(p,q,r))+T_{r,r}~\text{over~all}~ r \in R_{p,q}~:~P\cap E(p,q,r)=\emptyset   &~~\textbf{\small (D)} \\
             \area(E(p,q,r))+U_{p,r}~\text{over~all}~ r \in R_{p,q}~:~P\cap E(p,q,r)=\emptyset   & \textbf{\small (E)} \\
         \end{array}
     \right.
\end{equation}
where for every pair $p,r\in P$ such that $p\prec' r$
\begin{equation}\label{eq3area}
    U_{p,r} ~=~ \max\{T_{r,s}\}~\text{over~all}~ s\in R'_{p\setminus r}.
\end{equation}

The areas $T'_{p,q}$ for the empty triple $3$-staircases $\mathcal T'_{p,q}$ can be calculated in a similar way.

{\bf The third step:} Let $S_{p,s}$\label{page46} be the area of $\mathcal{S}_{p,s}$. Recall that, for given $p,s\in P$ such that $p_x<s_x$, $\mathcal{S}_{p,s}$ is the empty rectilinear convex hull of maximum area, among all subsets $S\subseteq P$ containing $p$ and $s$ such that $RCH(S)$ is empty and there exist two points $q,r\in S$ with $(p,q,r,s)$  a $4$-separator of $RCH(S)$. Observe that if $p\prec' s$, then $S_{p,s}=T'_{p,q}+T_{r,s}$ for some $4$-separator $(p,q,r,s)$. Otherwise, if $p\prec s$ then $S_{p,s}=\area(B(u,v))+T'_{p,q}+T_{r,s}$ for some $4$-separator $(p,q,r,s)$, subject to $P(u,v)=\emptyset$, where $u=u(r)=(r_x,s_y)$ and $v=v(q)=(q_x,p_y)$ (see Figure~\ref{fig:Ci-empty}, right).%{fig:S-empty}

Given that $\area(B(u,v))$ depends on both $r$ and $q$, using the inclusion/exclusion principle, we can then calculate $S_{p,s}$ as
$$S_{p,s}=T'_{p,q}+\area(B(v,s)) + T_{r,s}+\area(B(p,u)) - \area(B(p,s)).$$
Since $p$ and $s$ are fixed, note that $U(p,q,s)=T'_{p,q}+\area(B(v(q),s))$ depends only on $q$ and $V(p,r,s)=T_{r,s}+\area(B(p,u(r)))-\area(B(p,s))$ depends only on $r$. Each of these two values can be computed in $O(1)$ time, once $T$ and $T'$ have been computed in the second step. If $S_0$ is a subset of $P$ such that $RCH(S_0)$ is empty, vertically separable and of maximum area, the new equations to compute $RCH(S_0)$ are:
\begin{equation}\label{eqopt1area}
    |RCH(S_0)| ~=~ \max\{S_{p,s}\}~\text{over~all}~ p,s\in P ~\text{with}~ p_x<s_x,
\end{equation}
where for each pair of points $p,s\in P$ such that $p_x < s_x$
\begin{equation}\label{eqopt2area}
     S_{p,s} ~=~ \left\{
         \begin{array}{lr}
             \max_{\{\text{$4$-separators $(p,q,r,s)$\}}}\{T'_{p,q}+T_{r,s}\}~ & p \prec' s \\
             \max_{\{\text{$4$-separators $(p,q,r,s)$\}}}\{U(p,q,s) + V(p,r,s))\}~\text{: $P(u(r),v(q))=\emptyset$} & p\prec s.\\
         \end{array}
         \right.
\end{equation}

\begin{theorem}\label{the:MaxAreaRCH}
The \prob{MaxAreaRCH} problem can be solved in $O(n^3)$ time and $O(n^2)$ space.
\end{theorem}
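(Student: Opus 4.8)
The plan is to reduce everything to the size version: the algorithm for \prob{MaxAreaRCH} is the same three-step dynamic program as for \prob{MaxEmptyRCH} in Theorem~\ref{the:MaxEmptyRCH}, with the increments ``$+1$'' and ``$+\alpha_{p,q}$'' replaced throughout by the area of the rectangular slab newly added to the relevant region. So the work is to check that (i) the resulting area recurrences are correct and (ii) each table entry is still produced in $O(n)$ time, so that nothing becomes slower.

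For the first two steps I would argue as follows. An empty $2$-staircase region $B(p,q)\cap M_2(S\cup\{p,q\})$ is the disjoint union of the open slabs $B(r,u)$ added one per elbow as the staircase is traversed from $p$ to $q$ (with the degenerate top slab $B(q,u)$ of area $0$), so replacing the count increment in Equation~(\ref{eq1b}) by $\area(B(r,u))$ gives the correct recurrence~(\ref{eq1area}); the constraint $P(r,u)=\emptyset$ is still tested in $O(1)$ with the $O(n^2)$-size range-counting structure. Likewise, the region $\mathcal O_{\mathcal T}$ of an empty triple $1$-staircase is the disjoint union of the base $2$-staircase region and the open rectangles $E(\cdot,\cdot,\cdot)$ contributed one per recursive step as the staircase is built, so replacing the constants $1$ and $\alpha_{p,q}$ in Equation~(\ref{eq2b}) by $\area(E(p,q,r))$ yields Equation~(\ref{eq2area}); its correctness is the verbatim case analysis of Lemma~\ref{lem:step2}, splitting on the position of the predecessor/successor points $p^-,q^-$ among $B(p,q)$, $R_{p\setminus q}$, $R_{q\setminus p}$, $R_{p,q}$ (cases \textbf{\small (A)}--\textbf{\small (E)}), and invoking Observation~\ref{obs:empty} to force $p^-\prec q^-$ in case~\textbf{\small (E)}. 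By symmetry the same holds for the $T'_{p,q}$'s. Each of the $O(n^2)$ entries of $C$, $T$, $U$, $T'$ is a maximum over $O(n)$ candidates, so these tables cost $O(n^3)$ time and $O(n^2)$ space.

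The only genuinely new point is the third step, which I would handle as in the proof of Theorem~\ref{the:MaxEmptyRCH}. Fix $p,s$ with $p_x<s_x$. If $p\prec' s$, the interior of $RCH(S)$ for a $4$-separator $(p,q,r,s)$ is the disjoint union of the two triple-staircase regions, so $S_{p,s}=\max\{T'_{p,q}+T_{r,s}\}$ over all $4$-separators, found by the single left-to-right sweep of $L_{p,s}$ used in Theorem~\ref{the:MaxRCH}. If $p\prec s$, the interior also contains the middle rectangle $B(u,v)$ with $u=(r_x,s_y)$, $v=(q_x,p_y)$, which must be empty, i.e.\ $P(u,v)=\emptyset$; the new wrinkle is that $\area(B(u,v))$ couples $q$ and $r$. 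This is resolved by inclusion--exclusion: $B(v,s)$ and $B(p,u)$ have union $B(p,s)$ and intersection exactly $B(u,v)$, so $\area(B(u,v))=\area(B(v,s))+\area(B(p,u))-\area(B(p,s))$ and hence $S_{p,s}=U(p,q,s)+V(p,r,s)$ with $U(p,q,s)=T'_{p,q}+\area(B(v(q),s))$ depending only on $q$ and $V(p,r,s)=T_{r,s}+\area(B(p,u(r)))-\area(B(p,s))$ depending only on $r$. With this additive decoupling the sweep of Theorem~\ref{the:MaxEmptyRCH} goes through unchanged: process $L_{p,s}$, the $x$-sorted list of $Q_p\cup Q_s\cup Q_{p,s}$; maintain $q_t^*$, the $q'\in Q_p$ seen so far maximizing $U(p,q',s)$ with no element of $Q_{p,s}$ between $q'$ and $t$ (reset to $\nil$ at each element of $Q_{p,s}$, since $P(u,v)=\emptyset$ is precisely the absence of a $Q_{p,s}$ point between $q$ and $r$ in $x$-order); whenever $t\in Q_s$ and $q_t^*\ne\nil$, record the feasible $4$-separator $(p,q_t^*,t,s)$ with value $U(p,q_t^*,s)+V(p,t,s)$. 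Each $U$, $V$ value is $O(1)$ to evaluate once $T,T'$ are tabulated, so $S_{p,s}$ is obtained in $O(n)$ time; Equations~(\ref{eqopt1area})--(\ref{eqopt2area}) then give the optimum over the $O(n^2)$ pairs $(p,s)$ in $O(n^3)$ time and $O(n^2)$ space overall.

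The step I expect to need the most care is verifying, in the $p\prec s$ case, that the interior of $RCH(S)$ really is the disjoint union of $\mathcal O_{\mathcal T'_{p,q}}$, $\mathcal O_{\mathcal T_{r,s}}$ and $B(u,v)$ — i.e.\ that the $4$-separator conditions $q\prec' r$, together with $d\prec' q$ or $d=q$ and $r\prec' b$ or $r=b$, keep the two triple staircases on opposite sides of $B(u,v)$ and out of each other's zones — and that emptiness of this union is exactly the conjunction of the three conditions enforced (the two step-2 constraints plus $P(u,v)=\emptyset$). Once that geometric bookkeeping is in place, the area version is a line-by-line transcription of the size version.
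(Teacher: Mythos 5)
Your proposal is correct and follows essentially the same route as the paper: the area-weighted versions of the three dynamic programs, with the key new ingredient being the inclusion--exclusion identity $\area(B(u,v))=\area(B(v,s))+\area(B(p,u))-\area(B(p,s))$ that decouples the contributions of $q$ and $r$ into $U(p,q,s)$ and $V(p,r,s)$, so that the linear sweep of $L_{p,s}$ from Theorem~\ref{the:MaxEmptyRCH} still works with $q_t^*$ now maximizing $T'_{p,q'}+\area(B(v(q'),s))$. This is exactly the paper's argument, which it states tersely by reference to the preceding derivation of Equations~(\ref{eqopt1area}) and~(\ref{eqopt2area}).
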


\begin{proof}
The proof follows the proof of Theorem~\ref{the:MaxEmptyRCH}. The only difference is that, when processing an element $t$ in $L_{p,s}$, $q_t^*$ is the point in $Q_p$ maximizing $T'_{p,q'}+\area(B(v(q'),s))$, instead of maximizing $T'_{p,q'}$. After processing all elements in $L_{p,s}$, $S_{p,s}$ is determined by a feasible $4$-separator that maximizes $T'_{p,q}+\area(B(v,s))+T_{r,s}+\area(B(p,u))-\area(B(p,s))$.
\end{proof}

\subsection{Maximum weight rectilinear convex hull}\label{sec:EmptyRectCH_3}

In the \prob{MaxWeightRCH} problem, each input point $p$ of $P$ comes with a (positive or negative) weight $w(p)$. We determine a subset $S\subseteq P$ such that $RCH(S)$ has maximum weight, that is, such that $\sum_{p\in P\cap RCH(S)}{}w(p)$ is maximized.

The algorithm to solve this problem combines the ideas of the previous algorithms and follows the same steps, however,  now we add weights. We define $\weight(Z)=\sum_{p\in P\cap Z}w(p)$ as the weight of a region $Z\subset\mathbb{R}^2$. Using the same range counting techniques~\cite{preparata2012computational} as in Section~\ref{sec:MaxRCH}, we can preprocess the grid $G$ generated by the vertical and horizontal lines through the elements of $P$ in $O(n^2)$ time and space, so that for every pair of vertices $u,v$ of $G$ we can query $\weight(B(u,v))=\sum_{p\in P(u,v)}w(p)$ in $O(1)$ time, for any rectangle $B(u,v)$.

Now, $\mathcal C_{p,q}$\label{page47}, $\mathcal T_{p,q}$\label{page48}, $\mathcal T'_{p,q}$\label{page49}, and $\mathcal S_{p,q}$\label{page50} are as described in Section~\ref{sec:MaxRCH}, except that weight is maximized. The weights are defined as follows. If $S=\{p,v_2,\ldots,v_{k-1},q\}$ is the set of vertices of a $2$-staircase, its weight is defined as $w(p)+w(q)+\weight(B(p,q)\cap M_2(S))$. Note that the weights of all points in $S$ are included in this formulae. For a triple $1$-staircase or a triple $3$-staircase $\mathcal T$, its weight is the addition of the weights of the points of $P$ that appear on the boundary or in the interior of $\mathcal{O}_{\mathcal{T}}$, the region associated with $\mathcal{T}$. Finally, the weight of a rectilinear convex hull is the addition of the points of $P$ on the boundary or the interior of the rectilinear convex hull.

{\bf The first step:} If $C_{p,q}$\label{page51} is the weight of $\mathcal C_{p,q}$, for a pair of points $p,q\in P$ such that $p\prec q$ or $q=p$, all $C_{p,q}$'s can be computed in $O(n^3)$ time and $O(n^2)$ space using the following recurrence:
\begin{equation}\label{eq1weight}
	C_{p,q} ~=~ \left\{
		\begin{array}{ll}
			w(p) & \mathbf{if~} p=q \\
			w(p)+\max_{r\in P(p,q)\cup \{q\}}\left\{\weight(B(r,u))+C_{r,q}\right\} &
				\mathbf{if~} p\neq q. \\
		\end{array}
	\right.
\end{equation}
where $u=(p_x,q_y)$. We set $\weight(B(q,u))=0$ as $B(q,u)$ is not defined.

{\bf The second step:} If $T_{p,q}$\label{page52} is the weight of $\mathcal T_{p,q}$, for every $p,q\in P$ such that $p\prec q$ or $p=q$, then all $T_{p,q}$'s (and, by symmetry, all $T'_{p,q}$'s) can be calculated in $O(n^3)$ time and $O(n^2)$ space using the following equations, where $\alpha_{p,q}=w(p)$ if $p=q$, and $\alpha_{p,q}=w(p)+w(q)$ if $p\neq q$:
\begin{equation}\label{eq2weight}
     \arraycolsep=1.4pt\def\arraystretch{1.2}
     T_{p,q} ~=~ \max\left\{
         \begin{array}{lr}
             C_{p,q} & \textbf{\small (A)} \\
             w(p) + \weight(E(p,q,r))+T_{r,q}~\text{over~all}~ r\in R_{p\setminus q}  & \textbf{\small (B)}\\
             w(q) + \weight(E(p,q,r))+T_{p,r}~\text{over~all}~ r\in R_{q \setminus p}   & \textbf{\small (C)} \\
             \alpha_{p,q} + \weight(E(p,q,r))+T_{r,r}~\text{over~all}~ r \in R_{p,q}   &~~\textbf{\small (D)} \\
             \alpha_{p,q}+ \weight(E(p,q,r))+U_{p,r}~\text{over~all}~ r \in R_{p,q}  & \textbf{\small (E)} \\
         \end{array}
     \right.
\end{equation}
where for every pair $p,r\in P$ such that $p\prec' r$
\begin{equation}\label{eq3weight}
    U_{p,r} ~=~ \max\{T_{r,s}\}~\text{over~all}~ s\in R'_{p\setminus r}.
\end{equation}

{\bf The third step:} For given $p,s\in P$ such that $p_x<s_x$, let $S_{p,s}$\label{page53} be the weight of $\mathcal{S}_{p,s}$. Using similar reasoning as in the previous subsection, one can show that, if $S_0$ is a subset of $P$ such that $RCH(S_0)$ is vertically separable of maximum weight, the following equations calculate $|RCH(S_0)|$:
\begin{equation}\label{eqopt1weight}
    |RCH(S_0)| ~=~ \max\{S_{p,s}\}~\text{over~all}~ p,s\in P ~\text{such that}~ p_x < s_x
\end{equation}
where for each pair of points $p,s\in P$ such that $p_x<s_x$
\begin{equation}\label{eqopt2weight}
     S_{p,s} ~=~ \left\{
         \begin{array}{lr}
             \max_{\{\text{$4$-separators $(p,q,r,s)$\}}}\{T'_{p,q}+T_{r,s}\}~ & p \prec' s\\
             \max_{\{\text{$4$-separators $(p,q,r,s)$\}}}\{U(p,q,s)+V(p,r,s))\}& p \prec s.\\
         \end{array}
         \right.
\end{equation}
Now $U(p,q,s)$ is defined as $T'_{p,q}+\weight(B(v(q),s))$, and $V(p,r,s)$ is defined as $T_{r,s}+\weight(B(p,u(r)))-\weight(B(p,s))$, with $u=u(r)=(r_x,s_y)$ and $v=v(q)=(q_x,p_y)$.

The proof of the next theorem is a straightforward adaptation of the previous arguments.

\begin{theorem}\label{the:MaxWeightRCH}
The \prob{MaxWeightRCH} problem can be solved in $O(n^3)$ time and $O(n^2)$ space.
\end{theorem}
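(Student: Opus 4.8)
The plan is to mirror, essentially verbatim, the three‑step development of Theorems~\ref{the:MaxRCH}--\ref{the:MaxAreaRCH}, replacing ``number of enclosed points'' / ``enclosed area'' by ``total weight of the enclosed points'' throughout. First I would note that, since the optimum is again $RCH(S_0)$ for some $S_0\subseteq P$, Lemmas~\ref{lem:canonical} and~\ref{lemm:4separator} apply unchanged, so we may assume $RCH(S_0)$ is vertically separable and therefore admits a $4$-separator $(p,q,r,s)$; hence the whole scheme ($2$-staircases, then triple $1$- and $3$-staircases, then gluing along a $4$-separator) carries over. As a preprocessing step I would build, in $O(n^2)$ time and space, partial sums over the $O(n^2)$ cells of the grid $G$ generated by the horizontal and vertical lines through $P$, so that $\weight(B(u,v))=\sum_{t\in P(u,v)}w(t)$ is answered in $O(1)$ time for any two grid vertices $u,v$, exactly as the range-counting structure of Section~\ref{sec:MaxRCH}.

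For the first and second steps, recurrences~\eqref{eq1weight}--\eqref{eq3weight} are obtained from the \prob{MaxRCH} recurrences~\eqref{eq1b}--\eqref{eq3b} by two mechanical substitutions: every additive constant that counted a newly attached staircase vertex ($1$ for one new vertex, $2$ or $\alpha_{p,q}$ for a pair) is replaced by the corresponding sum of weights ($w(p)$, $w(p)+w(q)$, $\alpha_{p,q}$ as redefined), and every side-condition ``$P\cap(\text{newly attached rectangle})=\emptyset$'' is dropped and instead the $\weight$ of that rectangle is \emph{added} to the recursion (an interior point of $RCH(S_0)$ now contributes its weight rather than being forbidden). Correctness then reduces to the same region-decomposition facts already established in Lemma~\ref{lem:step2} and Section~\ref{sec:EmptyRectCH_2}: the region $\mathcal{O}_{\mathcal{T}_{p,q}}$ of an optimal triple $1$-staircase splits, according to which of the cases \textbf{(A)}--\textbf{(E)} applies, into the recursive subregion, the open rectangle $E(p,q,r)$ (or, in case \textbf{(A)}, the empty $2$-staircase region handled by~\eqref{eq1weight}), and the newly attached vertex $p$ (resp. the vertices $p$ and $q$), and these pieces are pairwise disjoint as subsets of $\mathbb{R}^2$ -- in particular a newly attached vertex lies strictly to one side of the axis-parallel line supporting the recursive subregion, and it is a corner of, hence outside, the relevant open rectangle. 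Thus every point of $P$ in $\mathcal{O}_{\mathcal{T}_{p,q}}$ is counted exactly once, and $T_{p,q}$ equals the claimed maximum. Negative weights cause no difficulty because $C$, $T$, $T'$ (and $U$, $U'$) are defined as genuine maxima over all admissible vertex sets and the recurrences are exact dynamic programs, not greedy ones. By the same symmetry as before, $T'$ is computed identically, and all tables are filled in $O(n^3)$ time and $O(n^2)$ space exactly as in Lemma~\ref{lem:step2}, since each entry still costs $O(n)$ (case \textbf{(E)} charged to the construction of $U$) and each $\weight(\cdot)$ query costs $O(1)$.

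For the third step I would follow the proof of Theorem~\ref{the:MaxRCH}, adding the area-style inclusion--exclusion of Section~\ref{sec:EmptyRectCH_2} with $\weight$ in place of $\area$. Fix $p,s$ with $p_x<s_x$ and a $4$-separator $(p,q,r,s)$, and set $u:=(r_x,s_y)$, $v:=(q_x,p_y)$. When $p\prec' s$ there is no band between the two lobes and $S_{p,s}=T'_{p,q}+T_{r,s}$, as in Theorem~\ref{the:MaxRCH}. When $p\prec s$, $RCH(S_0)$ decomposes into $\mathcal{O}_{\mathcal{T}'_{p,q}}$, the band $B(u,v)$, and $\mathcal{O}_{\mathcal{T}_{r,s}}$; since $q\prec' r$ one has $B(v,s)\cup B(p,u)=B(p,s)$ and $B(v,s)\cap B(p,u)=B(u,v)$, whence $S_{p,s}=T'_{p,q}+\weight(B(v,s))+T_{r,s}+\weight(B(p,u))-\weight(B(p,s))=U(p,q,s)+V(p,r,s)$ with $U(p,q,s)=T'_{p,q}+\weight(B(v,s))$ depending only on $q$ and $V(p,r,s)=T_{r,s}+\weight(B(p,u))-\weight(B(p,s))$ depending only on $r$. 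Consequently $S_{p,s}$ is computed in $O(n)$ time by the same left-to-right sweep of the merged list $L_{p,s}$ as in the proof of Theorem~\ref{the:MaxRCH}, except that the running best $q_t^{*}$ is maintained with respect to $U(p,q',s)$ instead of $T'_{p,q'}$; since the weighted problem imposes no emptiness requirement on $B(u,v)$, the extra bookkeeping of Theorem~\ref{the:MaxEmptyRCH} (tracking the points of $Q_{p,s}$) is not needed. Summing over all $O(n^2)$ pairs $(p,s)$, together with Equations~\eqref{eqopt1weight} and~\eqref{eqopt2weight}, yields the optimum in $O(n^3)$ time and $O(n^2)$ space.

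The hard part will not be any new idea but the careful ``exactly once'' weighted accounting in the presence of a disconnected hull and pinched points: one must verify that the triple-staircase construction (which already isolates each pinched point as an endpoint of a triple staircase) together with the correction $-\weight(B(p,s))$ makes $\mathcal{O}_{\mathcal{T}'_{p,q}}$, $\mathcal{O}_{\mathcal{T}_{r,s}}$ and $B(u,v)$ partition $P\cap RCH(S_0)$, and that the newly attached vertices in cases \textbf{(B)}--\textbf{(E)} are neither dropped (they sit just outside both the attached open rectangle and the recursive closed region) nor double-counted. Everything else is a routine transcription of the arguments for the size and area variants.
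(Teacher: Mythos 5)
Your proposal is correct and follows essentially the same route as the paper: the paper likewise states that Theorem~\ref{the:MaxWeightRCH} is a straightforward adaptation of the preceding arguments, using the range-sum preprocessing for $O(1)$ weight queries, the weighted recurrences~\eqref{eq1weight}--\eqref{eq3weight} with emptiness conditions dropped and rectangle weights added, and the inclusion--exclusion split $S_{p,s}=U(p,q,s)+V(p,r,s)$ resolved by the same linear sweep of $L_{p,s}$. Your additional remarks on exact-once accounting and on why the $Q_{p,s}$ bookkeeping is unnecessary are consistent with (indeed slightly more explicit than) the paper's treatment.
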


\section{Maximum area orthoconvex polygon}\label{sec:ortho}

Let $\mathcal{R}$\label{page54} be an axis-aligned rectangle in the plane (usually called domain) and let $P\subset\mathcal{R}$ be a set of $n$ points in general position. In the \prob{MaxOrthoconvexPolygon} problem, we look for an orthoconvex polygon $\mathcal{OC}\subset\mathcal{R}$ of maximum area containing no element of $P$ in its interior. Recall that a polygon is orthoconvex if its sides are axis-parallel and its intersection with any horizontal or vertical line is empty, or a line segment. $\mathcal{OC}$ is bounded by four staircases, determined by some points of $P$ (see Figure~\ref{fig:ortho}). Observe the differences between an optimal solution for the \prob{MaxAreaRCH} problem and an optimal solution for the \prob{MaxOrthoconvexPolygon} problem (see Figures~\ref{fig:separators1} and~\ref{fig:ortho}). In both cases, the boundary of the solutions is defined by four staircases, but in the second case the $1$- and the $3$-staircases (and the $2$- and the $4$-staircases) interchange their roles. Thus, the techniques previously explained must be adapted to this new situation.

\begin{figure}[ht]
	\centering
	\includegraphics[scale=0.85,page=17]{img.pdf}
	\caption{\small{The maximum area orthoconvex polygon.}}\label{fig:ortho}
\end{figure}

As in the previous sections, our algorithm to solve \prob{MaxOrthoconvexPolygon} problem is divided into three steps. In the first step, the algorithm calculates empty staircases. In the second step, the algorithm computes empty orthoconvex polygons of maximum area bounded by three staircases. Finally, in the third step, we combine some of these empty orthoconvex polygons to find an optimal solution. The two main differences in relation to the previous algorithms are that we use $2$-separators (yet to be defined) instead of $4$-separators in the third step, and that the empty orthoconvex polygons used to find the optimal solution must also be defined for the orthogonal projections of the points of $P$ onto the sides of $\mathcal{R}$. This results in recurrences that are a bit more elaborate in the two first steps. We give some definitions, before describing the algorithm.

\begin{definition}
We use $c_1$, $c_2$, $c_3$, and $c_4$\label{page55} to denote the top-right, top-left, bottom-left, and bottom-right vertices of~$\mathcal{R}$, respectively. Let $\bar{P}_t$ (resp., $\bar{P}_{\ell}$, $\bar{P}_b$, $\bar{P}_r$)\label{page56} be the orthogonal projections of the points in $P$ onto the top (resp., left, bottom, right) side of $\mathcal{R}$, and let $\bar{P}=\bar{P}_t\cup \bar{P}_{\ell}\cup \bar{P}_b\cup \bar{P}_r \cup\{c_1,c_2,c_3,c_4\}$\label{page57}.
\end{definition}

Observe that for an optimal solution $\mathcal{OC}$, the boundary of $\mathcal{OC}$ always shares four segments with $\mathcal{R}$, each of them on one different side of $\mathcal{R}$ (see Figure~\ref{fig:ortho}). Let $d$ (resp., $a,b,c$) be the leftmost (resp., topmost, rightmost, bottommost) vertex of $\mathcal{OC}$ on the top (resp., left, bottom, right) side of $\mathcal{OC}$, as shown in Figure~\ref{fig:ortho}. Since these four points belong to $\bar{P}$, we have to define structures whose endpoints belong to $\bar{P}$ in the different steps of our algorithm, as we explain later.

\begin{definition}
For a point $p\in P$, let $x(p)$ and $y(p)$ be the orthogonal projections of $p$ onto the bottom and right sides of $\mathcal R$, respectively. Note that $x(p)\in \bar{P}_b$ and $y(p)\in \bar{P}_r$.
\end{definition}

For an optimal solution $\mathcal{OC}$, Lemma~\ref{lem:canonical} also holds. Therefore, we can assume again that rectangles $B(a,d)$ and $B(b,c)$ are separated by a vertical line, so $\mathcal{OC}$ is vertically separable. We now describe how to find an optimal solution of the \prob{MaxOrthoconvexPolygon} problem that is vertically separable.

{\bf The first step:} In this step, we build a table $D$ whose entries $D_{p,q}$\label{page58} contain areas that are associated with some empty $4$-staircases. Let $p$ and $q$ be a pair of points such that $p\in P\cup \bar{P}_b$ and $q\in P\cup \bar{P}_t\cup\{c_2\}$. If $p\in P$ and $q\not\prec p$, then $D_{p,q}$ is the area of the set $B(x(p),y(q))\setminus M_4(P(x(p),y(q)))$. If $p\in \bar{P}_b$, $p'$ is the point of $P$ such that $x(p')=p$ and $q\not\prec p'$, then $D_{p,q}$ is the area of the set $B(p,y(q))\setminus M_4(P(p,y(q)))\cup \{p'\}$. Figure~\ref{fig:D} shows some examples of $D_{p,q}$.

\begin{figure}[ht]
	\centering
	\includegraphics[scale=0.85,page=37]{img.pdf}
	\caption{\small{Examples of $D_{p,q}$.}}\label{fig:D}
\end{figure}

For every pair $p,q$, $M_4(P(x(p),y(q))$ (or $M_4(P(p,y(q)))\cup \{p'\}$) can easily be calculated in $O(n)$ time, so also $D_{p,q}$. Therefore, table $D$ can be filled in $O(n^3)$ time\footnote{This approach of defining the area as $B(x(p),y(q))\setminus M_4(P(x(p),y(q)))$ can also be used in the first step of Sections~\ref{sec:EmptyRectCH_1} and~\ref{sec:EmptyRectCH_2}, as an alternative way to compute table $C$.}.

{\bf The second step:} Let $p\in P\cup \bar{P}_b\cup \{c_4\}$, $q\in P\cup \bar{P}_t \cup \{c_2\}$, and $u=(p_x,q_y)$. We define the table $O$ with entries $O_{p,q}$ equal to the area of the orthoconvex empty polygon $\mathcal O_{p,q}$\label{page59} of maximum area, with some restrictions on $p$, $q$, and $u$. Let us suppose first that $p\in P$. The value $O_{p,q}$\label{page60} is the area of a maximum orthoconvex empty polygon with input points $P\cap Q_4(u)$ and domain $\mathcal{R}\cap \overline{Q_4(u)}$ such that the polygon contains $p$, $q$, and $u$ if $p\prec q$ (see Figure~\ref{fig:O} top-left), contains $p$ and $u$ if $q\prec 'p$ (see Figure~\ref{fig:O} top-middle) and contains $q$ and $u$ if $p\prec 'q$ (see Figure~\ref{fig:O} top-right). As we will see in Lemma~\ref{lemm:eq13}, we can calculate each entry $O_{p,q}$ with $p\in P$, using the following recurrence:

\begin{figure}[ht]
	\centering
	\includegraphics[scale=0.8,page=38]{img.pdf}%scale=0.8
	\caption{\small{Examples of $\mathcal O_{p,q}$.}}\label{fig:O}
\end{figure}

\begin{equation}\label{eq11}
    \arraycolsep=1.4pt\def\arraystretch{1.2}
    O_{p,q} ~=~ \max\left\{
        \begin{array}{lr}
            D_{p,q}, & \textbf{\small (A)} \\
            \area(B(u,r))+O_{r,q}~\text{over~all}~
                    r\in N_{p,q} & ~\textbf{\small (B)} \\
            \area(B(u,r))+O_{p,r}~\text{over~all}~
                    r\in N_{q,p} &  \textbf{\small (C)}
        \end{array}
    \right.
\end{equation}
where
\[
    N_{p,q} ~=~ \left\{
        \begin{array}{ll}
            r\in P\cap Q_4(p):P(u,r)=\emptyset & \mathbf{if~} p\prec q \mathbf{~or~} q\prec' p \\
            r\in P\cap Q_4(u):P(u,r)=\emptyset  & \mathbf{if~} p\prec' q,
        \end{array}
    \right.
\]
and
\[
    N_{q,p} ~=~ \left\{
        \begin{array}{ll}
            r\in P\cap Q_4(q):P(u,r)=\emptyset & \mathbf{if~} p\prec q \mathbf{~or~} p\prec' q \\
            r\in P\cap Q_4(u):P(u,r)=\emptyset  & \mathbf{if~} q\prec' p.
        \end{array}
    \right.
\]
Suppose  that $p\in \bar{P}_b$ and let $p'$ be the point of $P$ such that $x(p')=p$. In this case, $O_{p,q}$ is the area of a maximum orthoconvex empty polygon with input points $P\cap Q_4(u)\cap Q_1(p')$ and domain $\mathcal{R}\cap \overline{Q_4(u)} \cap\overline{Q_1(p')}$, such that the polygon contains $p'$, $q$, and $u$ if $p'\prec q$ (see Figure~\ref{fig:O} bottom-middle) and contains $p'$ and $u$ if $q\prec'p'$ (see Figure~\ref{fig:O} bottom-right). We calculate each entry $O_{p,q}$ with $p\in \bar{P}_b$, using the following recurrence:
\begin{equation}\label{eq11b}
	\arraycolsep=1.4pt\def\arraystretch{1.2}
	O_{p,q} ~=~ \max\left\{
		\begin{array}{lr}
			D_{p,q}, & \textbf{\small (A)} \\
			\area(B(u,r))+O_{p,r}~\text{over~all}~
				r\in N'_{q,p} &  ~~\textbf{\small (C)}
		\end{array}
	\right.
\end{equation}
where
\[
    N'_{q,p} ~=~ \left\{
        \begin{array}{ll}
            r\in P\cap Q_4(q)\cap Q_1(p') :P(u,r)=\emptyset & \mathbf{if~} p'\prec q \\
            r\in P\cap Q_4(u)\cap Q_1(p') :P(u,r)=\emptyset & \mathbf{if~} q\prec ' p'.
        \end{array}
    \right.
\]
Finally, suppose that $p=c_4$. For any point $q$, an orthoconvex polygon in region $\mathcal{R}\cap \overline{Q_4(u)}$ that contains  $p$ and $u$ degenerates to the segment $up$. Thus, we define $O_{p,q}=0$ in this case. The following lemma proves the correctness of Equations~(\ref{eq11}) and~(\ref{eq11b}).

\begin{figure}[ht]
	\centering
	\includegraphics[scale=0.9,page=22]{img.pdf}%scale=0.9
	\caption{\small{Some cases in the recursive computation of $O_{p,q}$, $p\in P$.}}\label{fig:cases-ortho}
\end{figure}

\begin{lemma}\label{lemm:eq13}
The previous recurrences correctly calculate all the $O_{p,q}$ values in $O(n^3)$ time and $O(n^2)$ space.
\end{lemma}

\begin{proof}
Assume first that $p\in P$. Given $\mathcal O_{p,q}$, let $p^-\in P\cup \bar{P}_b$ be the vertex of $\mathcal O_{p,q}$ that follows $p$ (or $u$ when $p\notin \mathcal O_{p,q}$) in the anti-clockwise traversal of the boundary of $\mathcal O_{p,q}$ (see Figure~\ref{fig:O}). Similarly, let $q^-\in P\cup \bar{P}_r \cup \{c_1\}$ be the vertex of $\mathcal O_{p,q}$ that precedes $q$ (or $u$ when $q\notin \mathcal O_{p,q}$) in the counter-clockwise traversal of the boundary of $\mathcal O_{p,q}$ (see Figure~\ref{fig:O}). Note that we are excluding the points of $\bar{P}_t$ in the definition of $q^-$, although some of them could be a vertex of $\mathcal O_{p,q}$ when $q\in \bar{P}_t$ (see Figure~\ref{fig:O} bottom-left). Also note that $q^-$ can be $c_1$ only if $q\in \bar{P}_t$.

If $p^-\in \bar{P}_b$ and $q^-\in \bar{P}_r\cup \{c_1\}$, then $O_{p,q}=D_{p,q}$, and this fits into case $\textbf{\small (A)}$ of Equation~\eqref{eq11}. Otherwise, $p^-$ or $q^-$ (or both) belongs to $P$. Note that if $p^-\in P$ then necessarily $p^-\in N_{p,q}$, as $P(u,r)=\emptyset$ and $p^-$ is in either $Q_4(p)$ (if $p\prec q$ or $q\prec' p$) or $Q_4(u)$ (if $p\prec' q$). Similarly, if $q^-\in P$ then $q^-\in N_{q,p}$. In the first case we have $O_{p,q}=\area(B(u,p^-))+O_{p^-,q}$, and letting $r=p^-$ this fits into case $\textbf{\small (B)}$ of Equation~\eqref{eq11} (see Figure~\ref{fig:cases-ortho}). In the second case we have $O_{p,q}=\area(B(u,q^-))+O_{p,q^-}$, and letting $r=q^-$ this fits into case $\textbf{\small (C)}$ of Equation~\eqref{eq11}.

The case in which $p\in \bar{P}_b$ is similar to prove, only using point $q^-$. If $q^-\in \bar{P}_r\cup \{c_1\}$, then  $O_{p,q}=D_{p,q}$. If $q^-\in P$, then $q^-\in N'_{q,p}$ and $O_{p,q}=\area(B(u,q^-))+O_{p,q^-}$. Letting $r=q^-$ this fits into case $\textbf{\small (C)}$ of Equation~\eqref{eq11b}. Hence, Equations~\eqref{eq11} and~\eqref{eq11b} correctly calculate all the $O_{p,q}$'s.
\end{proof}

To finish this step, let $\mathcal{O}'_{p,q}$ denote the orthoconvex polygon $\mathcal{O}_{q,p}$ after rotating the coordinates by $\pi$. Let $O'_{p,q}$ denote the area of $\mathcal{O}'_{p,q}$. Note that by symmetry, all the $O'_{p,q}$ values can also be calculated in $O(n^3)$ time.

{\bf The third step:} We show now that by combining some orthoconvex polygons $\mathcal{O}_{p,q}$ and $\mathcal{O}'_{p,q}$, we can find a vertically separable orthoconvex polygon $\mathcal{OC}\subset\mathcal{R}$ of maximum area, whose interior does not contain elements of $P$. To this end, we next define a $2$-separator. Let $p\in P\cup \bar{P}_b\cup\{c_4\}$ and $q\in P\cup \bar{P}_t\cup\{c_2\}$.

\begin{definition}
We say that $(p,q)$ is a \emph{$2$-separator} of $\mathcal{OC}$ if: $(1)$ $p$ belongs to the $3$-staircase of $\mathcal{OC}$; $(2)$ $q$ belongs to the $1$-staircase of $\mathcal{OC}$; $(3)$ $q\prec'p$; and $(4)$ $\mathcal{OC}$ is the union of the pairwise interior-disjoint polygons, $\mathcal{O}'_{p,q}$, $\overline{B(p,q)}$, and $\mathcal{O}_{p,q}$, where $\overline{B(p,q)}$ is the closure of the open rectangle $B(p,q)$ (see Figure~\ref{fig:4step-orth0}).
\end{definition}

\begin{figure}[ht]
	\centering
	\includegraphics[scale=0.9,page=33]{img.pdf}
	\caption{\small{Third step of the algorithm: $2$-separators $(p,q)$.}}\label{fig:4step-orth0}
\end{figure}

This definition implies that if the optimal solution $\mathcal{OC}$ has a $2$-separator $(p,q)$, then $$\area(\mathcal{OC})=O'_{p,q}+\area(B(p,q))+O_{p,q}.$$
The next lemma establishes the equivalence between vertical separability and $2$-separators.

\begin{lemma}\label{lem:separator2}
$\mathcal{OC}$ is vertically separable if and only if $\mathcal{OC}$ has a $2$-separator.
\end{lemma}

\begin{proof}
Let $\mathcal{OC}\subset\mathcal{R}$ be an orthoconvex polygon of maximum area that is vertically separable, i.e., rectangles $B(a,d)$ and $B(b,c)$ are separated by a vertical line. If there exist elements of $P$ to the right of the vertical line $\ell_d$ through $d$ that are vertices of the $3$-staircase of $\mathcal{OC}$, then let $p$ be the leftmost of them. Otherwise, let $p=b$. If there exist elements of $P$ between $\ell_d$ and the vertical line through $p$ that are vertices of the $1$-staircase of $\mathcal{OC}$, then let $q$ be the rightmost of them. Otherwise, let $q=d$. For the definitions of $p$ and $q$, refer to Figure~\ref{fig:4step-orth0}. It is easy to see that $(p,q)$ is a $2$-separator of $\mathcal{OC}$.

Let $(p,q)$ be a $2$ separator of $\mathcal{OC}$. Since $p$ belongs to the $3$-staircase, we have $p=b$ or $p\prec' b$. Similarly, since $q$ belongs to the $1$-staircase, we have $q=d$ or $d\prec' q$. Given that $q\prec' p$, we have by transitivity that $d\prec' b$, which implies $d_x<b_x$. Therefore, rectangles $B(a,d)$ and $B(b,c)$ are separated by a vertical line, so $\mathcal{OC}$ is vertically separable.
\end{proof}

Following Lemma~\ref{lem:separator2}, the algorithm to find $\mathcal{OC}$ is simple: For every $p\in P\cup \bar{P}_b\cup\{c_4\}$ and $q\in P\cup \bar{P}_t\cup\{c_2\}$, we compute the orthoconvex polygon $\mathcal{OC}_{p,q}$ of maximum area (equals to $O'_{p,q}+\area(B(p,q))+O_{p,q}$), such that $(p,q)$ is a $2$-separator of $\mathcal{OC}_{p,q}$. The optimal solution $\mathcal{OC}$ will be one of these $\mathcal{OC}_{p,q}$ values. Summarizing over all $2$-separators $(p,q)$ with $p\in P\cup \bar{P}_b\cup\{c_4\}$ and $q\in P\cup \bar{P}_t\cup\{c_2\}$:
\begin{equation}\label{eqoc1}
    \area(\mathcal{OC}) ~=~ \max\{O'_{p,q}+\area(B(p,q))+O_{p,q}\}.
\end{equation}
%over all $2$-separators $(p,q)$ with $p\in P\cup \bar{P}_b\cup\{c_4\}$ and $q\in P\cup \bar{P}_t\cup\{c_2\}$.

This last step requires $O(n^2)$ time, once all the $O_{p,q}$ values and all the $O'_{p,q}$ values have been calculated. Thus, we have obtained a new $O(n^3)$ time and $O(n^2)$ space algorithm for the \prob{MaxOrthoconvexPolygon} problem which is simpler than the one obtained by Nandy et al.~\cite{nandy2010}.

\begin{theorem}[Nandy et al.~\cite{nandy2010}]
The \prob{MaxOrthoconvexPolygon} problem can be solved in $O(n^3)$ time and $O(n^2)$ space.
\end{theorem}

\section{Empty staircase polygon with the largest area}\label{sec:MESP}

In this section, we further extend the applications of our techniques above to show that finding an empty staircase polygon with the largest area amidst the $n$-point set $P\subset\mathcal{R}$ (i.e., the \prob{MaxStaircasePolygon} problem) can be done in $O(n^2)$ time and space. Recall that given a domain $\mathcal{R}$ and a point set $P\subset\mathcal{R}$, a maximum-area empty staircase polygon is an orthoconvex polygon contained in $\mathcal{R}$ with no point of $P$ in the interior, that includes two opposed corners of $\mathcal{R}$ as vertices. See Figure~\ref{fig:mesp}, left.

\begin{figure}[ht]
	\centering
	\includegraphics[scale=0.9,page=23]{img.pdf}
	\caption{\small{A maximum area empty staircase polygon, and the three cases for $D_{p,q}$.}}\label{fig:mesp}
\end{figure}

The algorithm in this section only consists of two steps: In the first step, some empty rectangles are defined. In the second step, we build some empty staircase polygons of maximum area for some pairs of points. One of these polygons will be the optimal solution for the \prob{MaxStaircasePolygon} problem.

{\bf The first step}: Let $\hat{P}=P\cup \{c_2\}$. For every $p,q\in \hat{P}$ such that $q\not\prec p$, we redefine $D_{p,q}$ to denote the $\area(B(x(p),y(q)))$ when the open rectangle $B(x(p),y(q))$ does not contain elements of $P$ (see Figure~\ref{fig:mesp}, right) and $D_{p,q}=-\infty$ otherwise. Recall that $u=(p_x,q_y)$. Let $D$ be the table containing all of the values $D_{p,q}$. Note that we do not need to explicitly compute $D$, since each entry can be computed in $O(1)$ time on demand.

{\bf The second step}: Consider a new table $O$, whose entry $O_{p,q}$ is the area of a maximum-area empty staircase polygon $\mathcal{O}_{p,q}$\label{page61} defined for: (1) $p=q=c_2$; (2) $p,q\in P$, $p\prec q$, $P(p,q)=\emptyset$; and (3) $p,q\in\hat{P}$, $q\prec' p$ or $p\prec' q$. Recall that these empty staircase polygons $\mathcal{O}_{p,q}$ are built with input points $P\cap Q_4(u)$ and domain $\mathcal{R}\cap\overline{Q_4(u)}$ such that the polygon contains $p$, $q$, and $u$ if $p\prec q$, contains $p$ and $u$ if $q\prec' p$, and contains $q$ and $u$ if $p\prec' q$. See Figure~\ref{fig:stairp} for some examples.

\begin{figure}[ht]
	\centering
	\includegraphics[scale=0.8,page=28]{img.pdf}
	\caption{\small{Some examples of $\mathcal{O}_{p,q}$.}}\label{fig:stairp}
\end{figure}

When $p\prec q$, note that $\mathcal{O}_{p,q}$ always contains $B(p,q)$ in its interior, hence necessarily $P(p,q)=\emptyset$ so that $\mathcal{O}_{p,q}$ is defined. This is the reason why condition $P(p,q)=\emptyset$ was added to item (2) in the definition. This condition has more implications when calculating all the values $\mathcal{O}_{p,q}$. Suppose that $p\prec q$ and take a point $r\in R_{p\setminus q}$ (see Figure~\ref{fig:next}, where $r_p$ in the figure plays the role of $r$). If $\mathcal{O}_{p,q}$ is obtained from $\mathcal{O}_{r,q}$ and $B(u,r)$, then necessarily $P(u,r)=\emptyset$, implying that $P(p,r)=\emptyset$. On the other hand, in $\mathcal{O}_{r,q}$ the set $P(q,r)$ must be necessarily empty. By adding these two restrictions $P(u,r)=P(q,r)=\emptyset$ when calculating $\mathcal{O}_{p,q}$, we reduce the number of points $r$ to consider, as there exists only one point $r$ in $R_{p\setminus q}$ such that $P(p,r)$ and $P(q,r)$ are empty at the same time.

Using a detailed case analysis, the following recurrence including restrictions on the emptiness of several subsets, allows us to calculate all the values $O_{p,q}$ in $O(n^2)$ time:
\begin{equation}\label{eq13}
    \arraycolsep=1.4pt\def\arraystretch{1.2}
    O_{p,q} ~=~ \max\left\{
        \begin{array}{lr}
            D_{p,q} & \textbf{\small (A)} \\
            \area(B(u,r))+O_{r,q}~\text{over~all}~r\in M_{p,q} & ~\textbf{\small (B)}\\
            \area(B(u,r))+O_{p,r}~\text{over~all}~r\in M_{q,p} &  \textbf{\small (C)}
        \end{array}
    \right.
\end{equation}
where
\[
	M_{p,q} ~=~ \left\{
		\begin{array}{ll}
			r\in P\cap Q_4(p):P(u,r)=P(q,r)=\emptyset & \mathbf{if~} p\prec q\\
			r\in P\cap Q_4(p):P(u,r)=\emptyset        & \mathbf{if~} q\prec' p\\
			r\in P\cap Q_4(u):P(u,r)=P(q,r)=\emptyset & \mathbf{if~} p\prec' q
		\end{array}
	\right.
\]
and
\[
	M_{q,p} ~=~ \left\{
		\begin{array}{ll}
			r\in P\cap Q_4(q):P(u,r)=P(p,r)=\emptyset & \mathbf{if~} p\prec q\\
			r\in P\cap Q_4(u):P(u,r)=P(p,r)=\emptyset & \mathbf{if~} q\prec' p\\
			r\in P\cap Q_4(q):P(u,r)=\emptyset        & \mathbf{if~} p\prec' q.
		\end{array}
	\right.
\]
We remark that item (2) $p,q\in P$, $p\prec q$, $P(p,q)=\emptyset$ could be replaced by (2) $p,q\in P$, $p\prec q$, as the previous recurrence assigns the value $-\infty$ to $O_{p,q}$ (meaning that $\mathcal O_{p,q}$ does not exist), when $p\prec q$ and $P(p,q)\neq \emptyset$. In the definition of $M_{p,q}$ and $M_{q,p}$ when $p\prec q$, observe that for a point $r\in M_{p,q}\cup M_{q,p}$ we have $P(u,r)=P(q,r)=P(p,r)=P(p,q)=\emptyset$. In particular, this implies that if $P(p,q)\neq \emptyset$ then $M_{p,q}\cup M_{q,p}=\emptyset$. Thus, when applying Equation~\eqref{eq13} to the pair $p\prec q$ with $P(p,q)\neq \emptyset$, only case (A) applies, as $M_{p,q}\cup M_{q,p}=\emptyset$. Since $B(x(p),y(q))$ contains points of $P$, then $O_{q,p}=D_{q,p}=-\infty$ by definition.

The correctness of Equation~\eqref{eq13} follows in a similar way as Equation~\eqref{eq11}: If a point $p^-\in P$ follows $p$ (or $u$) in the $3$-staircase of $\mathcal O_{p,q}$, then $\mathcal O_{p,q}$ is obtained from $\mathcal O_{p^-,q}$ and $B(u,p^-)$ (Case (B)). In the same way, if a point $q^-\in P$ precedes $q$ (or $u$) in the $1$-staircase of $\mathcal O_{p,q}$, then $\mathcal O_{p,q}$ is obtained from $\mathcal O_{p,q^-}$ and $B(u,q^-)$ (Case (C)). If such points do not belong to $P$, then $\mathcal O_{p,q}$ is $B(x(p),y(p))$ (Case (A)).

\begin{figure}[ht]
	\centering
	\subfloat[]{\includegraphics[scale=1.05,page=29]{img.pdf}\label{fig:first}}\\
	\subfloat[]{\includegraphics[scale=1.05,page=30]{img.pdf}\label{fig:next}}
    \subfloat[]{\includegraphics[scale=1.05,page=39]{img.pdf}\label{fig:other}}
	\caption{\small{(a) The definition of $\first(p,q)$. (b) The case $p\prec q$ in the proof of Lemma~\ref{lem:iteration}, where $r_p\neq\nil$ and $r_q\neq\nil$: The points of $L_4(p)$ that are in $R_{p,q}$ are $t_1=\first(p,q)$, $t_2$, $t_3$, and $t_4$, and $R_{p,q}\cap M_{p,q}=R_{p,q}\cap M_{q,p}=\{t_1,t_2,t_3\}$. Note that $t_4$
is not present in this set because $r_q\in P(q,t_4)$. (c) The case $q\prec' p$ in the proof of Lemma~\ref{lem:iteration}, where $r_q\neq\nil$: The points of $L_4(p)$ that are in $R_{p,q}\cap M_{p,q}=R_{p,q}\cap M_{q,p}$ are $\{t_1,t_2,t_3\}$.}}\label{fig:first-next}
\end{figure}

Observe that $O_{c_2,c_2}$ is the area of the maximum-area empty staircase polygon amidst $P$, so we only need to show that Equation~\eqref{eq13} calculates all the values $O_{p,q}$ in $O(n^2)$ time, to solve the \prob{MaxStaircasePolygon} problem with this complexity.

For every $p\in\hat{P}$, let $L_i(p)$\label{page62}, $i\in\{1,2,3,4\}$, be the sequence of the points $t\in \hat{P}\cap Q_i(p)$ sorted by increasing $x$-coordinate such that $P(p,t)=\emptyset$. Since $P$ is already sorted by  $x$-coordinate, the sequence $L_i(p)$ can be computed in $O(n)$ time for each point $p$, thus $L_i(p)$ is computed in $O(n^2)$ time and space.

Let $K\subset \hat{P}\times \hat{P}$ be the set of point pairs $(p,q)$ such that the entry $O_{p,q}$ is defined. The next lemma shows that for a pair $(p,q)\in K$, the complexity of calculating the sets $M_{p,q}$ and $M_{q,p}$ is linear in their size, after an $O(n^2)$-time preprocessing.

\begin{lemma}\label{lem:iteration}
For each $(p,q)\in K$, the sets $M_{p,q}$ and $M_{q,p}$ can be computed, respectively, in $O(|M_{p,q}|)$ and $O(|M_{q,p}|)$ time, after an $O(n^2)$-time preprocessing.
\end{lemma}

\begin{proof}
We give an $O(n^2)$-time preprocessing to compute some special points that we call $\first(p,q)$. For every $t\in L_i(p)$, let $\next_i(p,t)$ be the point that goes after $t$ in $L_i(p)$. If $t$ is the last element, then $\next_i(p,t)=\nil$. We define $\first(p,q)$ for the next pairs $p,q\in \hat{P}$ as follows (see Figure~\ref{fig:first}):

\begin{itemize}
\item $p\prec q$ and $P(p,q)=\emptyset$: Let $\first(p,q)$ be the leftmost point $t$ in $L_4(p)$ such that $q\prec't$ and $P(q,t)=\emptyset$. Note that $q$ is in $L_1(p)$, and by simultaneously traversing $L_1(p)$ and $L_4(p)$, we can compute $\first(p,q)$ in $O(n)$ time for fixed $p$ and all $q$ such that $p\prec q$.

\item $q\prec' p$: Let $\first(p,q)$ be the leftmost point $t$ in $R'_{q\setminus p}$, which ensures $P(u,t)=P(p,t)=\emptyset$. Recall that when $q\prec' p$, $R'_{q\setminus p}$ is the subset of $P$ in the region $Q_4(q)\cap Q_1(p)$. Note that $t$ is in $L_1(p)$, and by simultaneously traversing the $y$-ordering of $\hat{P}$ and $L_1(p)$, we can compute $\first(p,q)$ in $O(n)$ time for fixed $p$ and all $q\prec' p$.

\item $p\prec' q$: Let $\first(p,q)$ be the topmost point $t$ in $R'_{q\setminus p}$, which ensures $P(u,t)=P(q,t)=\emptyset$. Recall that when $p\prec' q$, $R'_{q\setminus p}$ is the subset of $P$ in the region $Q_4(p)\cap Q_3(q)$. Note that $t$ is in $L_3(q)$, and by simultaneously traversing the $x$-ordering of $\hat{P}$ and $L_3(q)$, we can compute $\first(p,q)$ in $O(n)$ time for fixed $q$ and all $p\prec' q$.
\end{itemize}

In any of the above three cases, if the point $t$ does not exist, then $\first(p,q)=\nil$. Observe that we can compute $\first(p,q)$ for all $p,q$ such that $\first(p,q)$ is defined in  $O(n^2)$ time overall.

We now show how to compute $M_{p,q}$ and $M_{q,p}$ in $O(|M_{p,q}|)$ and $O(|M_{q,p}|)$ time, respectively. Consider the case where $p\prec q$, in which $P(p,q)=\emptyset$. Recall that for a point $r\in M_{p,q}\cup M_{q,p}$, we have $P(u,r)=P(q,r)=P(p,r)=P(p,q)=\emptyset$. Let $r_p=\next_3(q,p)$ and $r_q=\next_1(p,q)$, and assume without loss of generality that $r_p\neq \nil$ and $r_q\neq\nil$ (see Figure~\ref{fig:next}).

If there exists a point $r\in R_{p\setminus q}\cap M_{p,q}$, then the condition $P(p,r)=P(q,r)=\emptyset$ implies $r=r_p$. Similarly, $r\in R_{q\setminus p}\cap M_{q,p}$ implies $r=r_q$. Now, observe that $P(u,r)=\emptyset$ also implies that $R_{p,q}\cap M_{p,q}=R_{p,q}\cap M_{q,p}$. Moreover, if there are elements in these two equal sets, then they appear in $L_4(p)$ as consecutive elements, from $\first(p,q)$ to the right, until the last element in $L_4(p)$ to the left of the vertical line passing through $r_q$. Then, all of the observations together with the fact that we check $P(u,r)=P(p,r)=P(q,r)=\emptyset$ in $O(1)$ time, allow us to calculate $M_{p,q}$ and $M_{q,p}$ in $O(|M_{p,q}|)$ and $O(|M_{q,p}|)$ time, respectively. Note that $r_p$ and $\first(p,q)$ are consecutive in $L_4(p)$.

Suppose now that $q\prec' p$, see Figure~\ref{fig:other}. Let $r_q=\first(p,q)$ and assume without loss of generality that $r_q\neq \nil$. If there exists a point $r\in R'_{q\setminus p}\cap M_{q,p}$, then the condition $P(u,r)=P(p,r)=\emptyset$ implies $r=r_q$. Furthermore, the condition $P(u,r)=\emptyset$ implies $R'_{p,q}\cap M_{p,q}=R'_{p,q}\cap M_{q,p}$, where the elements of this set form a prefix of $L_4(p)$ until the last element in $L_4(p)$ to the left of the vertical line passing through $r_q$. Hence, we can calculate $M_{p,q}$ and $M_{q,p}$ in $O(|M_{p,q}|)$ and $O(|M_{q,p}|)$ time, respectively. The case $p\prec' q$ is symmetric.
\end{proof}

The following lemma proves that the number of triples $(p,q,r)$ such that $O_{p,q}$ is defined and that $r\in M_{p,q}\cup M_{q,p}$ is $O(n^2)$.
%Next we see that there are $O(n^2)$ triples $(p,q,r)$ such that $O_{p,q}$ is defined and $r\in M_{p,q}\cup M_{q,p}$.

\begin{lemma}\label{lem:edges}
$\sum_{(p,q)\in K} \left( |M_{p,q}|+|M_{q,p}| \right)~\in~O(n^2).$
\end{lemma}

\begin{proof}
$\sum_{(p,q)\in K} \left( |M_{p,q}|+|M_{q,p}| \right)$ counts the number of triples $(p,q,r)$ such that $O_{p,q}$ is defined and $r\in M_{p,q}\cup M_{q,p}$. If $p\prec q$ with $P(p,q)=\emptyset$, then we have: (1) If $r\in R_{p\setminus q}\cap M_{p,q}$, then $p$ and $r$ are consecutive in $L_3(q)$; (2) If $r\in R_{p,q}\cap M_{p,q}= R_{p,q}\cap M_{q,p}$, then $p$ and $q$ are consecutive in $L_2(r)$; (3) If $r\in R_{q\setminus p}\cap M_{q,p}$, then $q$ and $r$ are consecutive in $L_1(p)$. Since there are $n$ choices for a point $s\in P$, and for each $s$ there are at most $O(n)$ choices for two consecutive points in any $L_i(s)$, there are at most $O(n^2)$ triples $(p,q,r)$ with $p\prec q$.

Consider $q\prec' p$, and define $U=\{(s,t)\in \hat{P}\times P\mid s\prec' t\}$. If $r\in R'_{q\setminus p}\cap M_{q,p}$, we charge the triplet $(p,q,r)$ to $(q,p)\in U$. Condition $P(u,r)=P(p,r)=\emptyset$ implies that $r$ is unique for the combination $q,p$. Otherwise, if $r\in R'_{p,q}\cap M_{p,q}=R'_{p,q}\cap M_{q,p}$, then we charge the triplet $(p,q,r)$ to $(q,r)\in U$, where $P(u,r)=\emptyset$ implies that $p$ is unique for the combination $q,r$. Then, each element of $U$ is charged to at most two triples, and this ensures that there are at most $O(|U|)=O(n^2)$ triples for $q\prec' p$. Counting for the case $p\prec' q$ is symmetric. Thus the lemma follows.
\end{proof}

The combination of Equation~\eqref{eq13}, Lemma~\ref{lem:iteration}, and Lemma~\ref{lem:edges} allows us to calculate table $O$ in $O(n^2)$ time. We can also find an empty staircase polygon with the largest area $O_{c_2,c_2}$ in $O(n^2)$ time and space. Thus, we have given a new and simpler algorithm for the \prob{MaxStaircasePolygon} problem.

\begin{theorem}[Nandy and Bhattacharya~\cite{nandy2003}]
The \prob{MaxStaircasePolygon} problem can be solved in $O(n^2)$ time and space.
\end{theorem}

To conclude, we remark that the approach given in this section cannot be applied to the \prob{MaxOrthoconvexPolygon} problem described in Section~\ref{sec:ortho}. For example, when $p\prec q$, necessarily $P(p,q)=\emptyset$ in an empty staircase polygon. However, in an empty orthoconvex polygon this property is not true anymore, as the $4$-staircase of the polygon can cross $B(p,q)$ using points belonging to $P(p,q)$. Therefore, the conditions $P(u,r)=P(q,r)=\emptyset$ and $P(u,r)=P(p,r)=\emptyset$  cannot be added to the definition of $N_{p,q}$ and $N_{q,p}$, respectively, because otherwise they would imply that $P(p,q)=\emptyset$.

%\vspace{0.2cm}
%
%{\bf Acknowledgements}
\section*{Acknowledgements}
D.~O. was supported by projects MTM2017-83750-P of the Spanish Ministry of Science (AEI/FEDER, UE) and PID2019-104129GB-I00 / AEI / 10.13039/501100011033 of the Spanish Ministry of Science and Innovation. P.\ P-L.\ was partially supported by projects DICYT 041933PL Vicerrector\'ia de Investigaci\'on, Desarrollo e Innovaci\'on USACH (Chile), and Programa Regional STICAMSUD 19-STIC-02. C.~S. was supported by projects MTM2015-63791-R MINECO/FEDER, Gen.~Cat. DGR 2017SGR1640, and PID2019-104129GB-I00 / AEI / 10.13039/501100011033 of the Spanish Ministry of Science and Innovation. J.~T. was supported by projects MTM2015-63791-R MINECO/FEDER, Gobierno de Arag\'on E41-17R, and PID2019-104129GB-I00 / AEI / 10.13039/501100011033 of the Spanish Ministry of Science and Innovation. J.~U. was supported by PAPIIT grant IN102117 from UNAM.

%\small
%\bibliographystyle{abbrv}
%\bibliography{refs}

\newpage
\section*{Summary of notation}\label{notation}

The following table summarizes the main notation used. Horizontal lines denote a change of (sub)section implying a change on the meaning of a notation.

\begin{center}
\begin{tabular}{p{0.1\textwidth}p{0.75\textwidth}p{0.05\textwidth}}
Notation & Description & Page \\
\hline
$P$ & Point set in general position in the plane & \pageref{page1} \\
$a,b,c,d$ & Leftmost, bottommost, rightmost, and topmost points of~$P$ & \pageref{page2} \\
$p_x,p_y$ & $x$- and $y$-coordinates of a point~$p$ & \pageref{page3} \\
$p\prec q$ & Denotes that $p_x < q_x$ and $p_y < q_y$ & \pageref{page4} \\
$p\prec' q$ & Denotes that $p_x < q_x$ and $p_y > q_y$ & \pageref{page5} \\
$Q_1(p)$ & Open axis-aligned quadrant $\{q\in\mathbb{R}^2\mid p\prec q\}$ & \pageref{page6} \\
$Q_2(p)$ & Open axis-aligned quadrant $\{q\in\mathbb{R}^2\mid q\prec' p\}$ & \pageref{page7} \\
$Q_3(p)$ & Open axis-aligned quadrant $\{q\in\mathbb{R}^2\mid q\prec p\}$ & \pageref{page8} \\
$Q_4(p)$ & Open axis-aligned quadrant $\{q\in\mathbb{R}^2\mid p\prec' q\}$ & \pageref{page9} \\
$M_i(P)$ & Union $\bigcup_{p\in P} \overline{Q_i(p)}$ & \pageref{page10} \\
$RCH(P)$ & Rectilinear convex hull of~$P$, defined as $\bigcap_{i=1,2,3,4} M_i(P)$ & \pageref{page11} \\
$B(u,v)$ & Smallest open axis-aligned rectangle containing $u$ and~$v$ & \pageref{page12} \\
$P(u,v)$ & Set of points in~$P$ that belong to $B(u,v)$, i.e., $P\cap B(u,v)$ & \pageref{page13} \\
$R_{p\setminus q}$ & For $p\prec q$, subset of $P$ in the region $Q_4(p)\setminus \overline{Q_4(q)}$ & \pageref{page14} \\
$R_{q\setminus p}$ & For $p\prec q$, subset of $P$ in the region $Q_4(q)\setminus \overline{Q_4(p)}$ & \pageref{page15} \\
$R_{p,q}$ & For $p\prec q$, subset of $P$ in the region $Q_4(p)\cap Q_4(q)$ & \pageref{page16} \\
$R'_{p\setminus q}$ & For $q\prec'p$, subset of $P$ in the region $Q_4(q)\cap Q_3(p)$ & \pageref{page17} \\
$R'_{q\setminus p}$ & For $q\prec'p$, subset of $P$ in the region $Q_4(q)\cap Q_1(p)$ & \pageref{page18} \\
$R'_{p,q}$ & For $q\prec'p$, subset of $P$ in the region $Q_4(p)$ & \pageref{page19} \\
\hline
$\mathcal C_{p,q}$ & For $p\prec q$ or $p=q$, a $2$-staircase with endpoints $p$ and $q$ of maximum size & \pageref{page20} \\
$C_{p,q}$ & Number of elements of $P$ in $\mathcal C_{p,q}$ & \pageref{page21} \\
$Z(p,q)$ %=Q_4(u)$
 & Fourth quadrant associated with $u=(p_x,q_y)$ & \pageref{page22} \\
$z(p,q)$ & Points of $P$ in $Z(p,q)$, i.e., $=Z(p,q)\cap P$ & \pageref{page23} \\
$\mathcal T_{p,q}$ & Triple $1$-staircase of maximum cardinality among all $S\cup \{p,q\}$ with $S\subseteq z(p,q)$ & \pageref{page24} \\
$X_{p,q}$ & Set of extreme vertices of $\mathcal T_{p,q}$ & \pageref{page25} \\
$T_{p,q}$ & Cardinality of $X_{p,q}$ & \pageref{page26} \\
$T$ & Table storing values $T_{p,r}$ & \pageref{page27} \\
$U_{p,r}$ & $\max\{T_{r,s}\}~\text{over~all}~ s\in R'_{p\setminus r}$ & \pageref{eq3} \\
$U$ & Table storing values $U_{p,r}$ & \pageref{page29} \\
$\mathcal T'_{p,q}$ & Triple $3$-staircase of maximum cardinality %among all subsets $S'\cup \{p,q\}$, where $S'$ is a subset of points of $P$ in the quadrant $Q_2(v)$ with $v=(q_x,p_y)$
& \pageref{page30} \\
$T'_{p,q}$ & Size of $\mathcal T'_{p,q}$ & \pageref{page31} \\
$\mathcal{S}_{p,s}$ & For $p_x<s_x$, rectilinear convex hull of maximum size %among all subsets $S\subseteq P$ containing $p$ and $s$ such that there exist two points $q,r\in S$ with $(p,q,r,s)$ being a $4$-separator of $RCH(S)$
& \pageref{page32} \\
$S_{p,s}$ & Size of $\mathcal{S}_{p,s}$ & \pageref{page33} \\
\hline
\end{tabular}

\begin{tabular}{p{0.1\textwidth}p{0.75\textwidth}p{0.05\textwidth}}
$\mathcal C_{p,q}$ & Empty $2$-staircase of maximum cardinality %over all subsets $S\cup \{p,q\}$ with $S\subseteq P(p,q)$
& \pageref{page34} \\
$\mathcal{O}_{\mathcal{T}}$ & Region $\overline{Z(p,q)}\cap M_1(S\cup \{p,q\})\cap M_2(S\cup \{p,q\})\cap M_3(S\cup \{p,q\})$ %associated with $\mathcal{T}$
& \pageref{page35} \\
$\mathcal T_{p,q}$ & Empty triple $1$-staircase of maximum size %among all subsets $S\cup \{p,q\}$, with $S\subseteq z(p,q)$
& \pageref{page36} \\
$E(p,q,r)$ & Interior of $B(p,q)\cap Q_2(r)$ if $p\ne q$, the empty set if $p=q$ & \pageref{page37} \\
$T_{p,q}$ & Cardinality of the extreme vertices of $\mathcal T_{p,q}$ & \pageref{page38} \\
$\mathcal T'_{p,q}$ & Empty triple $3$-staircase
%of maximum cardinality among all subsets $S'\cup \{p,q\}$, where $S'$ is a subset of points of $P$ in the quadrant $Q_2(v)$ with $v=(q_x,p_y)$
& \pageref{page39} \\
$\mathcal{S}_{p,s}$ & For $p_x<s_x$, {empty} rectilinear convex hull of maximum size %among all subsets $S\subseteq P$ containing $p$ and $s$ such that $RCH(S)$ is empty and there exist two points $q,r\in S$ with $(p,q,r,s)$ being a $4$-separator of $RCH(S)$
& \pageref{page40} \\
\end{tabular}

\begin{tabular}{p{0.13\textwidth}p{0.72\textwidth}p{0.05\textwidth}}
\hline
$\mathcal C_{p,q}$ & As in the previous appearance, but maximizing area instead of size & \pageref{page41} \\
$\mathcal T_{p,q}$ & As in the previous appearance, but maximizing area instead of size & \pageref{page42} \\
$\mathcal T'_{p,q}$ & As in the previous appearance, but maximizing area instead of size & \pageref{page43} \\
$\mathcal S_{p,q}$ & As in the previous appearance, but maximizing area instead of size & \pageref{page44} \\
$T_{p,q}$ & Area of $\mathcal T_{p,q}$ & \pageref{page45} \\
$S_{p,s}$ & Area of $\mathcal{S}_{p,s}$ & \pageref{page46} \\
\hline
$\mathcal C_{p,q}$ & As in the first appearance, but maximizing weight & \pageref{page47} \\
$\mathcal T_{p,q}$ & As in the first appearance, but maximizing weight & \pageref{page48} \\
$\mathcal T'_{p,q}$ & As in the first appearance, but maximizing weight & \pageref{page49} \\
$\mathcal S_{p,q}$ & As in the first appearance, but maximizing weight & \pageref{page50} \\
$C_{p,q}$ & Weight of $\mathcal C_{p,q}$ & \pageref{page51} \\
$T_{p,q}$ & Weight of $\mathcal T_{p,q}$ & \pageref{page52} \\
$S_{p,s}$ & Weight of $\mathcal{S}_{p,s}$ & \pageref{page53} \\
\hline
$\mathcal{R}$ & Axis-aligned rectangle in the plane & \pageref{page54} \\
$c_1$, $c_2$, $c_3$, $c_4$ & Top-right, top-left, bottom-left, and bottom-right vertices of~$\mathcal{R}$, respectively & \pageref{page55} \\
$\bar{P}_t$, $\bar{P}_{\ell}$, $\bar{P}_b$, $\bar{P}_r$ & Orthogonal projections of $P$ onto the (resp.) top, left, bottom, right side of $\mathcal{R}$ & \pageref{page56} \\
$\bar{P}$ & Union $\bar{P}_t\cup \bar{P}_{\ell}\cup \bar{P}_b\cup \bar{P}_r \cup\{c_1,c_2,c_3,c_4\}$ & \pageref{page57} \\
$D_{p,q}$ & Areas associated with some empty $4$-staircases & \pageref{page58} \\
$\mathcal O_{p,q}$ & Orthoconvex empty polygon of maximum area & \pageref{page59} \\
$O_{p,q}$ & Area of $\mathcal O_{p,q}$ & \pageref{page60} \\
\hline
$\mathcal O_{p,q}$ & Maximum-area empty staircase polygon & \pageref{page61} \\
$L_i(p)$ & Sequence of the points $t\in \hat{P}\cap Q_i(p)$ %sorted by increasing $x$-coordinate
such that $P(p,t)=\emptyset$ & \pageref{page62} \\
\end{tabular}
\end{center}

\end{document}